\renewcommand{\Re}{\mathop{\rm Re}\nolimits}
\theoremstyle{plain} \newtheorem{theorem}{Theorem}[section]
\newtheorem{lemma}[theorem]{Lemma}
\newtheorem{proposition}[theorem]{Proposition}
\newtheorem{corollary}[theorem]{Corollary} \theoremstyle{definition}
\newtheorem{definition}[theorem]{Definition} \theoremstyle{remark}
\newtheorem{remark}[theorem]{Remark} \newtheorem{example}[theorem]{Example}
\newcommand{\R}{{\mathbb R}}
\newcommand{\Z}{{\mathbb Z}}
\newcommand{\N}{{\mathbb N}}
 \def\im{{\rm i}}
\newcommand{\C}{\mathbb{C}}
\newcommand{\bsig}{{\boldsymbol \sigma}}
\def\({\left(}
\def\){\right)}
\def\<{\left\langle}
\def\>{\right\rangle}
\newcommand{\zd}{{\delta\mathbb Z}}
\newcommand{\hd}{{\mathcal H_\delta}}
\newcommand{\sd}{{\mathcal S_\delta}}
\newcommand{\cd}{{\mathcal  C_\delta }}
\newcommand{\nd}{{\mathcal  N_\delta }}
\newcommand{\ud}{{\mathcal U_\delta }}
\newcommand{\hchd}{{\hat{\mathcal H}_\delta }}
\newcommand{\id}{{\mathfrak I_\delta }}
\newcommand{\hid}{{\hat{\mathfrak I}_\delta }}
\newcommand{\udi}{{U_{\mathrm{Dirac}} }}
\numberwithin{equation}{section}
\begin{document}

\title{Continuous limits of linear and nonlinear quantum walks}

\author {Masaya Maeda, Akito Suzuki}

\maketitle

\begin{abstract}
In this paper, we consider the continuous limit of a nonlinear quantum walk (NLQW)
that incorporates a linear quantum walk as a special case.
In particular, we rigorously prove that the walker (solution) of the NLQW on a lattice $\delta \Z$ uniformly converges (in Sobolev space $H^s$) to the solution to a nonlinear Dirac equation (NLD) on a fixed time interval as $\delta\to 0$.
Here, to compare the walker defined on $\delta\Z$ and the solution to the NLD defined on $\R$, we use Shannon interpolation.
\end{abstract}

\section{Introduction}
Quantum walks (QWs), or more precisely discrete time QWs, are quantum counterparts of classical random walks \cite{ADZ93PRA, ABNVW01, Gudder88Book}.  We can find the early prototypes in the context of Feynman path integral \cite{FH10Book, Riazanov58SPJETP} and Quantum Lattice Boltzmann methods \cite{Bialynicki-Birula94PRD, Meyer96JSP, SB93PD}.
QWs are now attracting diverse interest because of its connection to various fields of mathematics and physics such as orthogonal polynomials on the unit circle \cite{CGMV10CPAM}, quantum search algorithms \cite{AKR05Proc},
topological insulators \cite{
Kitagawa12QIP, KRBD10PRA}.
Further, since the early works studying QWs \cite{Bialynicki-Birula94PRD, FH10Book, Gudder88Book,  Riazanov58SPJETP,  SB93PD} are all more or less motivated by the discretization of Dirac equation, the relation between QWs and Dirac equation and other wave equation have repeatedly discussed by many authors from various viewpoints \cite{ANF14JPA, BDT15AP,BES07PRA, LKN15PRA,MB18PRA, MBD13PRA, MD12JMP, Sh13,Strauch06PRA73}.

Nonlinear QWs (NLQWs), which are nonlinear versions of the usual (linear) QWs with the nonlinearity coming into the dynamics through the state-dependence of the quantum coin, was first proposed in \cite{NPR07PRA} as an ``optical Galton board \cite{BMKSW99PRA}" with Kerr effect.
From then, several models of NLQWs have been proposed motivated by simulating nonlinear Dirac equations (NLD) \cite{LKN15PRA,MDB15PRE} and studying the nonlinear effect to the topological insulators \cite{GTB16PRA}.
See also \cite{MSSSS18DCDS, MSSSS18QIP, MSSSSnum} for the study of scattering phenomena, weak limit theorem and soliton-like behavior for NLQWs.
We note that for continuous time QWs, which are substantially described by discrete Schr\"odinger equations, nonlinear models are also attracting interest because it can speed up the quantum search \cite{MW13NJP}.

In this paper, motivated by the above works, we study the connection between NLQWs and nonlinear Dirac equations (NLD).
In particular, we show that the walker (or the solution) of NLQWs converges to the solution to the NLD.
Very roughly, we show that for fixed $T>0$,
\begin{align}\label{roughmain}
\| u(m \delta) -v_\delta(m \delta)\|_{L^2}\to 0\ \text{as}\ \delta\to0,\quad \text{uniformly for }m\in\Z,\ 0\leq m\leq T/\delta,
\end{align}
where $u$ is the solution to the NLD, $v_\delta$ is the walker of NLQW on $\delta\Z$(for the precise statement, see Theorem \ref{thm:1}
below).
Thus, we see that the walker converges to the solution to the NLD uniformly in a fixed time interval.
We emphasize that our model incorporates a linear quantum walk as a particular case and hence Theorem \ref{thm:1} says that the walker of the linear QW also converges to the solution to a Dirac equation. 

This paper is organized as follows. 
In Subsections \ref{subsec:NLQW} and \ref{subsec:NLD} we introduce NLQW and NLD respectively.
In Subsection \ref{subsec:Main}, we state our main result Theorem \ref{thm:1}.
In Section \ref{sec:pre} we recall some facts of the Shannon interpolation.
In Section \ref{sec:Proof}, we prove Theorem \ref{thm:1}.

\subsection{Nonlinear quantum walks}\label{subsec:NLQW}
We now introduce NLQWs, which are space-time discretized dynamics conserving $l^2$ norm (in the linear case, it is a unitary dynamics).
Let $\delta>0$ be a constant and set
\begin{align}\label{zd}
\zd:=\{\delta n\ |\ n\in \Z\}.
\end{align}
We set
\begin{align}\label{def:hd}
\hd :=l^2(\delta\Z,\C^2),\quad \<u,v\>_{ \hd}:=\delta \sum_{x\in\zd}\<u(x),v(x)\>_{\C^2}\text{ and }\|u\|_{ \hd}^2:=\<u,u\>_{ \hd},
\end{align}
where $\<\cdot,\cdot\>_{\C^2}$ is the inner-product
of $\C^2$, i.e.\ for $u=(u_1,u_2)\in \C^2$ and $v=(v_1,v_2)\in \C^2$, $\<u,v\>_{\C^2}= \sum_{j=1,2}u_j\bar v_j$, and we set $\|u\|_{\C^2}^2:=\<u,u\>_{\C^2}$.
We also use the standard Pauli matrices
\begin{align}\label{pauli}
\sigma_0:=\begin{pmatrix}1&0\\0&1\end{pmatrix},\  \sigma_1:=\begin{pmatrix}0 &1\\1&0\end{pmatrix},\  \sigma_2:=\begin{pmatrix}0&-\im \\ \im &0\end{pmatrix},\  \sigma_3:=\begin{pmatrix}1 &0\\0&-1\end{pmatrix}.
\end{align}
We define a shift operator $\sd:\hd\to\hd$ by
\begin{align}\label{sd}
\sd:=\begin{pmatrix} \mathcal T_{+,\delta}& 0\\ 0 & \mathcal T_{-,\delta} \end{pmatrix},\quad \mathcal T_{\pm, \delta}f:=f(\cdot\mp \delta).
\end{align}
It is clear that $\sd$ is a unitary operator on $\hd$.
\begin{remark}
Recall (formally) $\(e^{t \partial_x}f\)(x)=f(x+t)$.
Thus, we can express $\sd$ as
\begin{align}\label{sd2}
\sd=\begin{pmatrix} e^{-\delta \partial_x}& 0\\ 0 & e^{\delta \partial_x} \end{pmatrix}=e^{-\delta \sigma_3 \partial_x}=e^{\im \delta 	 A},\quad A:=\im \sigma_3 \partial_x.
\end{align}
In the following, we will use the expression of \eqref{sd2} for $\sd$.
\end{remark}

We next fix a smooth function $\mathbf s=(s_0,s_1,s_2,s_3):\R\to \R^4$ and set a linear coin operator 
$
\mathcal C_{\delta,\mathbf s}:\hd\to\hd$ by
\begin{align*}
\( 
\mathcal C_{\delta,\mathbf s}  u\)(x):=e^{-\im \delta \mathbf s(x)\cdot \bsig}u(x),\quad x\in \zd,
\end{align*}
where $\mathbf s(x)\cdot \bsig=\sum_{\alpha=0}^3s_\alpha(x)\sigma_\alpha$.
For simplicity, we often suppress the explicit dependence on 
$\mathbf s$ and write $\cd$ for $\mathcal C_{\delta,\mathbf s}$.
Since for each $x\in \zd$, $e^{-\im \delta \mathbf s(x)\cdot \bsig}$ is a $2\times 2$ unitary matrix, it is clear that $  \cd $ is unitary operator on $\hd$.


To define a nonlinear (or state-dependent) coin operator,
we fix $\gamma $ to be a $2\times 2$ Hermitian matrix and smooth function $g\in C^\infty(\R,\R)$.
We now define the nonlinear coin 
$
\mathcal N_{\delta,\gamma,g }:\hd\to\hd$ by
\begin{align}\label{nonlinearity}
\(
\mathcal N_{\delta,\gamma,g }
u\)(x) =e^{-\im\delta g(\<u(x),\gamma  u(x)\>_{\C^2})\gamma }u(x), 
\quad x\in \zd.
\end{align}
When there is no ambiguity we drop the dependence on $\gamma$ and $g$ and write just $\nd$ for $\mathcal N_{\delta,\gamma,g }$. 
Notice that since
$
\<\nd u(x), \nd u(x)\>_{\C^2}=\<u(x),u(x)\>_{\C^2},
$
we have
\begin{align}\label{nunit}
\|\nd u\|_{\hd}=\|u\|_{\hd}.
\end{align}


\begin{definition}
For $u_0\in  \hd$ and $m\in \Z$, $m\geq 0$, we define $\ud(m)u_0\in  \hd$ by the recurrence relation 
\begin{align}\label{qw1}
\ud(0)u_0=u_0,\quad
\ud(m+1)u_0=\mathcal S_\delta	 \cd  \nd\(\ud(m)u_0\).
\end{align}
\end{definition}

\begin{remark}
If $g=0$, then $\mathcal U_\delta$ is a linear unitary operator.
However, if $g\neq 0$, $\mathcal U_\delta$ becomes a nonlinear operator.
This is the reason why we need to define $\mathcal U_\delta(t)u_0$ be the recurrence relation \eqref{qw1}.
\end{remark}

We give several examples of our model, which cover various QWs appeared in the literature.

\begin{example}[Free QWs]
When $g=0$ and $\mathbf s$ do not depend on $x\in \Z$, we will call the corresponding QW a free QW. 
This quantum walk is also called homogeneous since the coin operator is spatially homogeneous.
A typical example is the case $\mathbf s=(0,-1,0,0)$, which appeared in the Feynman checkerboard model \cite{FH10Book}.
In particular, the coin operator in this case have the form
\begin{align*}
\mathcal C_{\delta,(0,-1,0,0)} =e^{\im \delta\sigma_1}=\begin{pmatrix} \cos \delta& \im \sin \delta\\ \im \sin \delta & \cos \delta\end{pmatrix}.
\end{align*}
Another important example is the Hadamard walk \cite{ABNVW01}, which is usually considered for the case $\delta=1$, and the coin operator is given by the Hadamard matrix:
\begin{align}\label{Hadamardcoin}
\mathcal C_{1,\frac{\pi}{4}(2,0,1,-2)} =e^{-\im \frac{\pi}{4}\(2,0,1,-2\)\cdot \bsig}=\frac{1}{\sqrt{2}}\begin{pmatrix} 1&1\\ 1 & -1\end{pmatrix}.
\end{align}
\end{example}

\begin{example}[Linear QWs]
When $g=0$, we will call the corresponding QW a linear QW.
A typical example will be the case $\mathbf s(x)=(0,0,\theta(x),0)$ where $\theta:\R\to \R$ is a function converging to some limit $\theta_\pm$ as $x\to \pm \infty$.
In this case, the coin operator
\begin{align*}
\mathcal C_1=R(\theta)=\begin{pmatrix} \cos \theta & -\sin \theta \\ \sin \theta & \cos \theta\end{pmatrix}
\end{align*}
is spatially inhomogeneous and called a position-dependent coin.
Such a model appears in the context of topological insulators \cite{Kitagawa12QIP} and the scattering theory for linear QWs are studied in \cite{MSSSSdis, MoriokaQW1, RST18LMPII,RST18LMPI, Suzuki16QIP}.
\end{example}

\begin{example}[NLQWs, I]\label{NPR}
When $g\neq 0$, we will call the corresponding QW a NLQW.
NLQW first proposed in \cite{NPR07PRA} is of the form
\begin{align*}
\mathcal U_{\mathrm{NPR}}:=\mathcal S_1 \mathcal C_{1,\frac{\pi}{4}(2,0,1,-2)} \mathcal N_{1,\gamma_1,g} \mathcal N_{1,\gamma_2,g},
\end{align*}
where the linear coin is given by the Hadamard matrix \eqref{Hadamardcoin}
and the two nonlinear coins are defined by $g(s)=\lambda s$ ($\lambda\in\R$), $\gamma_1=\frac{1}{2}(\sigma_0+\sigma_3)$ and $\gamma_2=\frac{1}{2}(\sigma_0-\sigma_3)$.
In particular, for $u=(u_1,u_2)$, we have
\begin{align*}
\mathcal N_{1,\gamma_1,g} u(x)=\begin{pmatrix}e^{\im \lambda |u_1(x)|^2}&0\\ 0 & 1\end{pmatrix}u(x),\quad \mathcal N_{1,\gamma_2,g} u(x)=\begin{pmatrix}1&0\\ 0 & e^{\im \lambda |u_2(x)|^2}\end{pmatrix}u(x).
\end{align*}

\end{example}

\begin{remark}
Even though our result in this paper is proved for 
the case of a single nonlinear coin, it can be extended to 
the case of two nonlinear coins as stated above without difficulty.
\end{remark}

\begin{example}[NLQWs, II]\label{GNT}
Another example of NLQW, which was proposed in \cite{LKN15PRA} as a simulator of a nonlinear Dirac equation, is
\begin{align*}
\mathcal U_{LKN}:=\mathcal S_\delta\mathcal C_\delta \mathcal N_{\delta,\sigma_j,g} ,
\end{align*}
where $g(s)= s$, $\mathcal C_1=R(\theta)$ ($\theta\in\R$) and $j=0$ or $3$.
The case $j=3$ is for simulating the Gross-Neveu model (scaler type interaction)
and
the nonlinear coin is of the form
\begin{align*}
\mathcal N_{\delta,\sigma_3,g}u(x)=\begin{pmatrix} e^{-\im \delta (|u_1(x)|^2-|u_2(x)|^2)} & 0 \\ 0 & e^{\im \delta (|u_1(x)|^2-|u_2(x)|^2)}\end{pmatrix}.
\end{align*}
The case $j=0$ is for simulating the Thirring model (vector type interaction) and
the nonlinear coin is of the form
\begin{align*}
\mathcal N_{\delta,\sigma_3,g}u(x)=e^{-\im \delta (|u_1(x)|^2+|u_2(x)|^2)}u(x).
\end{align*}
\end{example}

\subsection{Nonlinear Dirac equations in $1+1$ space-time}\label{subsec:NLD}
The Dirac equation on $\R$ is given by
\begin{align}\label{Dirac}
\im \partial_t u=-\im \sigma_3\partial_xu +\mathbf s \cdot \bsig u+g(\<u,\gamma u\>_{\C^2})\gamma  u,\quad (t,x)\in \R\times \R,\ u:\R\to\C^2.
\end{align}
Here, $\mathbf s:\R\to\R^4$, $\gamma$ is a $2\times2$ Hermitian matrix and $g:\R\to\R$ corresponds to the ones given in the definition of the NLQW.
Indeed, we will show that a solution to the NLQW converges to a solution to the NLD with the same $\mathbf s$, $\gamma$ and $g$.
We will denote the solution 
$u = u(t)$
to the Dirac equation \eqref{Dirac}
with the initial condition $u(0) = u_0$
by
\begin{align*}
u(t)=U_{\mathrm{Dirac}}(t)u_0. 
\end{align*}
Note that if the Dirac equation \eqref{Dirac} is
nonlinear  (i.e.\ if $g$ is not a constant), 
then so is $U_{\mathrm{Dirac}}$.
For a comprehensive introduction for the linear Dirac equation, see \cite{Thaller92Book}.

As the NLQW, we introduce several examples of the NLD.
\begin{example}[NLD: Gross-Neveu model and Thirring model]
For the case $\mathbf s=(0,0,m,0)$ ($m$ is a constant),  $\gamma=\sigma_3$ (resp. $\gamma=\sigma_0$), $g(s)=s$, NLD \eqref{Dirac} is called the Gross-Neveu model \cite{GN74PRD} (resp.\ Thirring model \cite{Thirring58AP}).
Further, a generalized Gross-Neveu model, which is the case of general $g\in C^\infty(\R,\R)$, has been studied in \cite{CPS17AIHPAN}.
\end{example}

\begin{example}[Nonlinear coupled mode equations]\label{ex:NCME}
Let $\mathbf s=(V,\kappa,0,0)$ and suppose that
$V,\kappa\in C^\infty(\R,\R)$ are bounded functions and $\gamma_1=\frac12 (\sigma_0 +\sigma_3)$, $\gamma_2=\frac12 (\sigma_0 - \sigma_3)$. 
Then the NLD becomes
\begin{align*}
\im \partial_t u = -\im \sigma_3 u + \mathbf s\cdot \bsig u + 2\<u,u\>_{\C^2}u -\<u,\gamma_1u\>_{\C^2}\gamma_1u-\<u,\gamma_2u\>_{\C^2}\gamma_2u.
\end{align*}
Such a model appears in the study of nonlinear propagation of light in an optical fiber waveguide \cite{GSWK05CM, GWH01JNS}.
A similar model also appears in the study of Bose-Einstein condensation.
In particular, in \cite{PCP06PRE}, the following model is studied:
\begin{align*}
\im \partial_t u = -\im \sigma_3 u + \mathbf s\cdot \bsig u + \<u,u\>_{\C^2}^2 u -2\<u,\gamma_1u\>_{\C^2}^2\gamma_1u-2\<u,\gamma_2u\>_{\C^2}^2\gamma_2u.
\end{align*}
As we remarked in Example \ref{NPR}, our result in this paper can be generalized to the case of several nonlinear coins without difficulty.
\end{example}

We introduce some mathematical results about NLD.
To do so, we prepare several notations.

We set $L^2=L^2(\R,\C^2)$ and $H^s:=H^s(\R,\C^2)$ ($s\in \N$), the $\C^2$-valued Sobolev spaces.
The inner product of $L^2$ will be denoted by
\begin{align*}
\<u,v\>:=\int_\R \<u(x),v(x)\>_{\C^2}\,dx.
\end{align*}
We set $\|u\|_{L^2}:=\<u,u\>^{1/2}$.
The norm of $H^s$ is defined by
\begin{align}\label{norm:Hs}
\|u\|_{H^s}^2:=\sum_{j=0}^s
\| \partial_x^j u\|_{L^2}^2.
\end{align}
We further, define the innerproduct of $H^s$ by
\begin{align*}
\<u,v\>_{H^s}:=\sum_{j=0}^s \<\partial_x^j u, \partial_x^j v\>.
\end{align*}

Since for $s\geq 1$, $H^s$ becomes an algebra, one can show the following result by standard fixed point argument.
\begin{proposition}\label{lwp:dirac}
Let $s\geq 1$ and suppose that
$\|\mathbf s\|_{L^\infty}+\|\mathbf s'\|_{H^{s-1}}<\infty$.
Let $L>0$. Then there exists $T>0$ such that a unique solution $u(t) = U_{\mathrm{Dirac}}(t)u_0\in C([0,T],H^s )$ of NLD \eqref{Dirac} exists for any  $u_0\in H^s$ with $\|u_0\|_{H^s}\leq L$.
Further, for $u_j \in H^s$ with $\|u_j\|_{H^s} \leq L$ ($j=1,2$), 
$U_{\mathrm{Dirac}}(t)u_j$ satisfies
\begin{align*}
\sup_{t\in[0,T]}\|U_{\mathrm{Dirac}}(t)u_1 - U_{\mathrm{Dirac}}(t) u_2 \|_{H^s}\leq C \|u_1-u_2\|_{H^s},
\end{align*}
where $C$ is a constant depends only on $L$.
\end{proposition}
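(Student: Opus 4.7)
\medskip
\noindent\textbf{Proof proposal.}

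The plan is to run a standard contraction mapping argument on the Duhamel formulation of \eqref{Dirac}. Set $A=\im\sigma_3\partial_x$, so that the free Dirac group is $e^{-\im tA}$, which is just a block-diagonal left/right translation and hence an isometry on every $H^s$. The mild form of \eqref{Dirac} is
\begin{equation*}
\Phi(u)(t) := e^{-\im tA}u_0 \;-\; \im\int_0^t e^{-\im(t-\tau)A}\Bigl[\mathbf s\cdot\bsig\,u(\tau) + g\bigl(\<u(\tau),\gamma u(\tau)\>_{\C^2}\bigr)\,\gamma u(\tau)\Bigr]\,d\tau,
\end{equation*}
and we look for a fixed point in the closed ball
\begin{equation*}
B_{R,T}:=\bigl\{u\in C([0,T],H^s)\,:\,\sup_{t\in[0,T]}\|u(t)\|_{H^s}\leq R\bigr\},
\end{equation*}
with $R:=2L$ and $T>0$ to be chosen small enough depending on $R$, on $\|\mathbf s\|_{L^\infty}+\|\mathbf s'\|_{H^{s-1}}$, and on $g,\gamma$ through local bounds on $g,g'$.

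The estimates I need are three: (i) unitarity of $e^{-\im tA}$ on $H^s$, which is immediate because $A$ commutes with $\partial_x$; (ii) the bound $\|\mathbf s\cdot\bsig\,u\|_{H^s}\leq C(\|\mathbf s\|_{L^\infty}+\|\mathbf s'\|_{H^{s-1}})\|u\|_{H^s}$, which follows by distributing derivatives via Leibniz and using that $H^s(\R)$ is an algebra for $s\geq 1$ (since $H^1\hookrightarrow L^\infty$ in one dimension); (iii) a Moser-type estimate for the nonlinear term, namely for $\|u\|_{H^s},\|v\|_{H^s}\leq R$,
\begin{equation*}
\bigl\|g(\<u,\gamma u\>_{\C^2})\gamma u - g(\<v,\gamma v\>_{\C^2})\gamma v\bigr\|_{H^s}\leq C(R)\,\|u-v\|_{H^s}.
\end{equation*}
Here one writes $\<u,\gamma u\>-\<v,\gamma v\>=\<u-v,\gamma u\>+\<v,\gamma(u-v)\>$, uses the algebra property of $H^s$ and Sobolev embedding $H^s\hookrightarrow L^\infty$ to control the product, and applies the standard composition estimate for smooth $g$ on $H^s$ restricted to a bounded set (which gives a uniform bound on $g$ and $g'$ applied to $\<u,\gamma u\>_{\C^2}$).

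Combining (i)--(iii) yields $\|\Phi(u)(t)\|_{H^s}\leq L + TC(R)R$ and
\begin{equation*}
\sup_{t\in[0,T]}\|\Phi(u)(t)-\Phi(v)(t)\|_{H^s}\leq T\,C(R)\,\sup_{t\in[0,T]}\|u(t)-v(t)\|_{H^s}.
\end{equation*}
Choosing $T=T(L)$ small enough so that $TC(R)R\leq L$ and $TC(R)\leq 1/2$, the map $\Phi$ is a contraction on $B_{R,T}$ and admits a unique fixed point $u\in C([0,T],H^s)$, which by construction solves \eqref{Dirac}. Writing the same contraction estimate for two solutions $U_{\mathrm{Dirac}}(t)u_1$ and $U_{\mathrm{Dirac}}(t)u_2$ with different initial data in the ball of radius $L$ gives
\begin{equation*}
\sup_{t\in[0,T]}\|U_{\mathrm{Dirac}}(t)u_1-U_{\mathrm{Dirac}}(t)u_2\|_{H^s}\leq \|u_1-u_2\|_{H^s}+\tfrac12\sup_{t\in[0,T]}\|U_{\mathrm{Dirac}}(t)u_1-U_{\mathrm{Dirac}}(t)u_2\|_{H^s},
\end{equation*}
and absorbing the second term yields the claimed Lipschitz bound with $C=2$.

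The only mildly technical point is estimate (iii), i.e.\ the Moser/composition bound for the nonlinear coin. For $s=1$ it is essentially a direct calculation using $H^1\hookrightarrow L^\infty$; for larger integer $s$ one iterates the Leibniz rule and uses the algebra property together with the fact that for smooth $g$ and $\|f\|_{L^\infty}\leq M$, the composition $g\circ f$ satisfies $\|g\circ f\|_{H^s}\leq C(M,g)(1+\|f\|_{H^s})$. No other step is delicate: the free evolution is an isometry and everything else is an algebra argument in $H^s$.
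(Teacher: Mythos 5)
Your proof is correct and is exactly the ``standard fixed point argument'' the paper invokes without writing out: Duhamel formulation, isometry of the free group, the algebra/Moser estimates for $s\geq 1$, and a contraction on a ball of radius $2L$ for $T$ small. One harmless typo: for the Duhamel form of \eqref{Dirac} one should take $A=-\im\sigma_3\partial_x$ rather than $\im\sigma_3\partial_x$, though since $e^{-\im tA}$ is an isometry on $H^s$ in either case the argument is unaffected.
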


Inspired by the above result for the solutions of nonlinear Dirac equations, we define the  condition $(\mathrm{Lip})_s$ as follows.
\begin{definition}\label{def:unifLip}
Let $s\geq 1$ and $T,L>0$.
We say that the pair $(T,L)$ satisfies condition $(\mathrm{Lip})_s$
if there exists a constant $C_{T,L}>0$ such that for $u_j\in  H^s$ 
with $\|u_j\|_{ H^s}\leq L$ ($j=1,2$), 
$\udi(\cdot)u_j\in C([0,T],H^s)$ and 
\begin{align}\label{unifLip}
\sup_{0\leq t\leq T}\|\udi(t) u_1-\udi(t) u_2\|_{  H^s}\leq C_{T,L} \|u_1-u_2\|_{ H^s}.
\end{align}
\end{definition}

\begin{remark}
By proposition \ref{lwp:dirac}, for any $L>0$, we can always find $T>0$ such that $(T,L)$ satisfies $(\mathrm{Lip})_s$.
Moreover, if NLD \eqref{Dirac} is globally wellposed,
we can take $T=\infty$ in proposition \ref{lwp:dirac} for arbitrary $L>0$
and hence
$(\mathrm{Lip})_s$ holds for arbitrary $(T,L)\in \R_+\times \R_+$.
It is known that if the nonlinearity 
comes only from $\sigma_0$, $\gamma_1$ and  $\gamma_2$ (as Example \ref{ex:NCME}), then the NLD is globally wellposed
(see \cite{Pelinovsky11RIMS} for more information).
\end{remark}

\subsection{Main results}\label{subsec:Main}

To state our result precisely, we introduce some notation.
When there exists a constant $C>0$ 
such that $a\leq Cb$, we write $a\lesssim b$ or $b\gtrsim a$.
If the implicit constant $C$ depends on some parameter $\alpha$ ($C=C_\alpha$), then we write $a\lesssim_\alpha b$.
If $a\lesssim b$ and $b\lesssim a$, we write $a\sim b$.

For Banach spaces $X,Y$, we set $\mathcal L(X,Y)$ to be the Banach space of all bounded linear operators from $X$ to $Y$.
We set $\mathcal L(X):=\mathcal L(X,X)$.

We set $\hchd:=L^2(\R/2\pi \delta^{-1}\Z,\C^2)$
 and define the inner product and norm by 
\begin{align*}
\<u,v\>_{\hchd}:=\int_{-\pi/\delta}^{\pi/\delta}\<u(\xi),v(\xi)\>_{\C^2}\,d\xi,\quad
 \|u\|_{\hchd}^2:=\<u,u\>_{\hchd }.
\end{align*}
We define the discrete Fourier transform $\mathcal F_\delta\in \mathcal L( \hd,\hchd )$ and its inverse $\mathcal F_\delta^{-1}\in \mathcal L(\hchd , \hd)$ as
\begin{align*}
\mathcal F_\delta u(\xi):=\frac{\delta}{\sqrt{2\pi}}\sum_{x\in \delta \Z} e^{-\im x \xi}u(x),\quad \mathcal F_\delta^{-1}v(x)=\frac{1}{\sqrt{2\pi}}\int_{-\pi/\delta}^{\pi/\delta} e^{\im x \xi}v(\xi)\,d\xi,
\end{align*}
where  $\hd$ is defined in \eqref{def:hd}.
Then, we have 
\begin{align*}
\<\mathcal F_\delta u, \mathcal F_\delta v\>_{\hchd }=\<u,v\>_{ \hd},\quad \<\mathcal F_\delta^{-1} u, \mathcal F_\delta^{-1} v\>_{ \hd}=\<u,v\>_{\hchd }.
\end{align*}

We denote the Fourier transform on $L^2$ by $F$.
In particular, we set
\begin{align*}
F u(\xi):=\frac{1}{\sqrt{2\pi}}\int_\R e^{-\im x \xi}u(x)\,dx,\quad F^{-1}u(x)=\frac{1}{\sqrt{2\pi}}\int_\R e^{\im x \xi}u(x)\,dx.
\end{align*}
We define $H_\delta\subset L^2$ by
\begin{align*}
H_\delta:=\{u\in L^2\ |\ \mathrm{supp} Fu\subset [-\pi/\delta,\pi/\delta]\}.
\end{align*}
Since the Fourier transform has compact support, we have $H_\delta\subset \cap_{s\geq 0}H^s$.
We define the projection to $H_\delta$ by
\begin{align}\label{def:j}
j_\delta= F^{-1} \chi_{[-\pi/\delta,\pi/\delta]}  F\in \mathcal L(L^2),
\end{align}
where $\chi_A$ is the characteristic function of $A$.
Obviously, we have
\begin{align}\label{j1}
j_\delta^2=j_\delta,\quad \mathrm{Ran}j_\delta= H_\delta,\quad \<j_\delta u,v\>=\<u,j_\delta v\>.
\end{align}

We next define  the Shannon interpolation (see, e.g.\  \cite{BF18pre}), which is an isometry from $\hd$ to $H_\delta\subset L^2$, by

\begin{align}\label{def:i}
\id:=F^{-1}\circ  \hid \circ \mathcal F_\delta \in \mathcal L( \hd,L^2),
\end{align}
where $\hid$ is the natural identification between $\hat{\mathcal H}_\delta$ and $H_\delta$ given by
\begin{align}\label{def:idelta}
\hid:\hchd \to H_\delta \subset L^2,\quad  \hid u (\xi):=\begin{cases} u(\xi),& \xi\in [-\pi/\delta,\pi/\delta]\\ 0, & |\xi|>\pi/\delta.
\end{cases}
\end{align}

We are now in a position to state our main result precisely.
\begin{theorem}\label{thm:1}
Let 
$s\geq 1$, $T>0$ and $L>0$.
Assume $\|\mathbf s\|_{L^\infty}+\| \mathbf s'\|_{H^{s}(\R,\R^4)}<\infty$ and $(T,L)$ satisfies condition $(\mathrm{Lip})_s$.
Assume $\|u_0\|_{ H^{s+1}}\leq L$.
Then, there exists $\delta_0>0$ 
such that for $\delta\in (0,\delta_0]$, 
\begin{align}\label{errorest}
\sup_{m\in\Z, 0\leq \delta m\leq T}\| \id \circ \ud(m)\circ \id^{-1}\circ j_\delta u_0 -\udi(m \delta)  u_0 \|_{H^s}
\lesssim_{T,L} \delta,
\end{align}
where the implicit constant is independent of $\delta$. 
\end{theorem}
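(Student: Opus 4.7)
The plan is a Lady--Windermere's-Fan (telescoping) argument reducing the error bound to (i) a one-step consistency estimate of order $\delta^2$, (ii) a uniform $H^{s+1}$ stability bound on the discrete iterates, and (iii) the Lipschitz hypothesis $(\mathrm{Lip})_s$ to propagate errors over $T/\delta$ steps. To set things up, let $v_0 := j_\delta u_0 \in H_\delta$, so $\|v_0\|_{H^{s+1}}\le \|u_0\|_{H^{s+1}}\le L$; set $W_\delta := \id \circ \ud(1)\circ \id^{-1}$ on $H_\delta$ (the Shannon-conjugated one-step operator) and $v_\delta(m) := W_\delta^m v_0$. Splitting
\[
v_\delta(m) - \udi(m\delta) u_0 = \bigl(v_\delta(m) - \udi(m\delta) v_0\bigr) + \udi(m\delta)\bigl(v_0 - u_0\bigr),
\]
the second term is controlled by $(\mathrm{Lip})_s$ together with the elementary Fourier-multiplier bound $\|(I-j_\delta)u_0\|_{H^s}\lesssim\delta\|u_0\|_{H^{s+1}}$, contributing $\lesssim_{T,L}\delta$.

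The heart of the proof is the one-step consistency estimate
\[
\|W_\delta v - \udi(\delta)v\|_{H^s}\lesssim_M \delta^2 \qquad\text{for }v\in H_\delta,\ \|v\|_{H^{s+1}}\le M. \qquad(\star)
\]
I would prove $(\star)$ by a Trotter-type split through an intermediate linear flow. Freeze the nonlinearity at $v$ and let $\tilde U_v(\delta):=e^{-\delta\sigma_3\partial_x - \im\delta V(v)}$ be the time-$\delta$ linear propagator for $\mathcal H(v):= -\im\sigma_3\partial_x + \mathbf s\cdot\bsig + g(\langle v,\gamma v\rangle_{\C^2})\gamma$, where $V(v)(x)=\mathbf s(x)\cdot\bsig + g(\langle v(x),\gamma v(x)\rangle_{\C^2})\gamma$. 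A Duhamel/Gronwall estimate, using Proposition \ref{lwp:dirac} and the fact that $w\mapsto \mathcal H(w)w$ is locally Lipschitz from $H^{s+1}$ to $H^s$, yields $\|\udi(\delta)v - \tilde U_v(\delta)v\|_{H^s}\lesssim_M\delta^2$. On the other hand, because $\mathcal S_\delta=e^{-\delta\sigma_3\partial_x}$ and $\cd\nd v = e^{-\im\delta V(v)}v$ pointwise, we have $W_\delta v = e^{-\delta\sigma_3\partial_x}\bigl(e^{-\im\delta V(v)}v\bigr)$, and a standard Trotter splitting bound gives $\|\tilde U_v(\delta)v - W_\delta v\|_{H^s}\lesssim_M \delta^2$; the crucial remainder involves the commutator $[-\im\sigma_3\partial_x, V(v)] = -\im\sigma_3 V(v)'$, whose $H^s$ norm is bounded in terms of $\|\mathbf s'\|_{H^s}$ and $\|v\|_{H^{s+1}}$ via Moser/algebra estimates. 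Combining the two gives $(\star)$.

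To close the argument I need a uniform a priori bound $\|v_\delta(m)\|_{H^{s+1}}\le L'$ for $m\delta\le T$. Because $\mathcal S_\delta$ is an $H^{s+1}$-isometry (its Fourier symbol has modulus one) and $e^{-\im\delta V(v)}$ is multiplication by a unitary matrix of the form $I + O(\delta)$, a Leibniz/Moser estimate gives $\|W_\delta v\|_{H^{s+1}}\le \bigl(1 + C\delta P(\|v\|_{H^{s+1}})\bigr)\|v\|_{H^{s+1}}$ for some polynomial $P$; a standard bootstrap then yields $\|v_\delta(m)\|_{H^{s+1}}\le e^{C_L T}L =: L'$ for all $\delta\le\delta_0$ small and $m\delta\le T$. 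Finally, telescope
\[
v_\delta(m) - \udi(m\delta)v_0 = \sum_{k=0}^{m-1} \udi\!\bigl((m-1-k)\delta\bigr)\bigl[W_\delta v_\delta(k) - \udi(\delta)v_\delta(k)\bigr],
\]
apply $(\mathrm{Lip})_s$ to each summand (all inputs have $H^s$ norms $\le L'$, so one works with the Lipschitz constant $C_{T,L'}$ after enlarging $L$ if necessary) and $(\star)$ with $M=L'$ to bound each summand by $\lesssim \delta^2$. Summing over $m\le T/\delta$ gives the claimed $\lesssim_{T,L}\delta$.

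The main obstacle, as I see it, is $(\star)$ at exactly the right level of regularity: the Trotter commutator is a first-order differential operator in $v$, forcing the $H^{s+1}$ assumption on $u_0$, and any misstep either requires $H^{s+2}$ (which is not assumed) or produces only an $O(\delta)$ consistency (which is useless after summation). Keeping the bookkeeping clean — in particular propagating the $H^{s+1}$ bound across $T/\delta$ nonlinear steps without picking up a compensating $\delta^{-1}$ loss from the discrete-continuous identification $\id,j_\delta$ — is the principal technical burden.
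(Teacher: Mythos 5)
Your telescoping (Lady Windermere's Fan) strategy is a genuinely different route from the paper, which instead follows Holden--Karlsen--Risebro--Tao: the paper introduces a three-parameter flow $v_\delta(t_1,t_2,t_3)$ interpolating the split steps, and runs a direct Gronwall estimate on $w_\delta=v_\delta-u_\delta$ along the diagonal, avoiding per-step Lipschitz constants altogether. However, as written your argument has a fatal gap at the central consistency estimate $(\star)$.

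The claim $W_\delta v = e^{-\delta\sigma_3\partial_x}\bigl(e^{-\im\delta V(v)}v\bigr)$ is false. For $v\in H_\delta$ one indeed has $\ud(1)(\id^{-1}v)=(S_\delta C_\delta N_\delta v)|_{\delta\Z}$ since the coin and nonlinearity act pointwise and the shift by $\delta$ preserves the lattice. But $\id$ of this restriction gives the band-limited Shannon interpolant, not $S_\delta C_\delta N_\delta v$ itself: pointwise multiplication by $e^{-\im\delta V(v)}$ does not preserve $H_\delta$. Thus $W_\delta v$ and $U_\delta(1)v:=S_\delta C_\delta N_\delta v$ agree on $\delta\Z$ but are distinct functions, and by Proposition~\ref{prop:shanon} (with $\sigma=1$) their $H^s$ difference is $O(\delta\,\|U_\delta(1)v\|_{H^{s+1}})=O(\delta)$, not $O(\delta^2)$. (To get $O(\delta^2)$ one would need $v\in H^{s+2}$, which is not available.) Consequently $(\star)$ can only give $O(\delta)$ per step, and the telescoping over $T/\delta$ steps yields $O(1)$ -- useless. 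The paper sidesteps this precisely by keeping $\id\ud(m)\id^{-1}j_\delta u_0$ and $U_\delta(m)j_\delta u_0$ separate throughout, noting that they agree on $\delta\Z$ for every $m$, and applying the Shannon estimate exactly once at the final step $m$; the telescoping (or its energy-method substitute) is done entirely at the level of the continuous operators $U_\delta$ versus $\udi$. If you replace $W_\delta$ by $U_\delta(1)$ in $(\star)$ and in the telescoping sum, and then compare $W_\delta^m v_0$ to $U_\delta(m)v_0$ by a single Shannon estimate at the end, your scheme becomes a correct alternative to the paper's.

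A secondary issue: in the telescoping you invoke $(\mathrm{Lip})_s$ for data with $H^s$ norm bounded by $L'=e^{CT}L>L$, ``enlarging $L$ if necessary''. The hypothesis only grants $(\mathrm{Lip})_s$ for $(T,L)$, and one cannot enlarge $L$ for free -- that would require the NLD to be well-posed on $[0,T]$ for larger data, which is not assumed. The paper's Lemma~\ref{lem:vboundimplyerror} avoids this by running a self-contained Gronwall estimate on $\|w_\delta\|_{H^s}$ that uses only the commutator/algebra bounds of Lemma~\ref{lem:com} and the a priori $H^{s+1}$ bound on $v_\delta$, never $(\mathrm{Lip})_s$ at a level above $L$; $(\mathrm{Lip})_s$ is applied only to compare $\udi(t)j_\delta u_0$ with $\udi(t)u_0$ and to bound $\|u_\delta(t)\|_{H^s}$, both at level $L$. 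Your chained version of Lady Windermere's Fan could likewise be made to work with a local-in-time one-step Lipschitz bound $1+O(\delta)$ derived from the equation rather than from $(\mathrm{Lip})_s$, but this needs to be argued rather than assumed.
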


Since $\|\cdot\|_{L^2}\leq \|\cdot\|_{H^s}$ for $s\geq 0$, 
we have the following continuous limit. 
\begin{corollary}
Under the same assumptions as in Theorem \ref{thm:1}, 
the walker 
of the NLQW
converges to the solution 
to the NLD in the following sense:
\[ 
\lim_{\delta \to 0} 
	\sup_{m\in\Z, 0\leq \delta m\leq T}\| \id \circ \ud(m)\circ \id^{-1}\circ j_\delta u_0 -\udi(m \delta)  u_0 \|_{L^2} = 0. 
\]
\end{corollary}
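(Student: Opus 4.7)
The plan is to deduce the corollary as an essentially one-line consequence of Theorem \ref{thm:1}, exploiting only the trivial norm comparison $\|\cdot\|_{L^2}\le \|\cdot\|_{H^s}$ valid for any $s\ge 0$. This inequality is immediate from the definition \eqref{norm:Hs}: the $j=0$ summand in $\sum_{j=0}^s\|\partial_x^j u\|_{L^2}^2$ is exactly $\|u\|_{L^2}^2$, while the remaining terms are nonnegative. There is no substantive obstacle here---all the quantitative work is already packaged into Theorem \ref{thm:1}, and the corollary merely records what its $H^s$ error bound implies in the weaker $L^2$ topology.

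In detail, I would fix $\delta\in(0,\delta_0]$ with $\delta_0$ supplied by Theorem \ref{thm:1}, and for each admissible $m\in\Z$ set
\begin{align*}
w_{m,\delta}:=\id\circ \ud(m)\circ \id^{-1}\circ j_\delta u_0-\udi(m\delta)u_0,
\end{align*}
which lies in $H^s$ since $j_\delta u_0\in H_\delta\subset \cap_{r\ge 0}H^r$, the operator $\id\circ \ud(m)\circ \id^{-1}$ acts on $H_\delta\subset H^s$, and $\udi(m\delta)u_0\in H^s$ by Proposition \ref{lwp:dirac}. Applying the pointwise inequality $\|w_{m,\delta}\|_{L^2}\le \|w_{m,\delta}\|_{H^s}$ and then taking the supremum over $m$ with $0\le \delta m\le T$ gives, via \eqref{errorest},
\begin{align*}
\sup_{0\le \delta m\le T}\|w_{m,\delta}\|_{L^2}\le \sup_{0\le \delta m\le T}\|w_{m,\delta}\|_{H^s}\lesssim_{T,L}\delta.
\end{align*}
Sending $\delta\to 0^+$ yields the stated limit; in fact the argument shows the stronger quantitative statement that the $L^2$ convergence occurs at rate $O(\delta)$.
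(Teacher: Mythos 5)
Your proposal is correct and coincides with the paper's own argument: the corollary is deduced directly from Theorem \ref{thm:1} via the elementary inequality $\|\cdot\|_{L^2}\le\|\cdot\|_{H^s}$, and then letting $\delta\to 0$. Nothing further is needed.
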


\begin{remark}
Theorem \ref{thm:1} calls for some explanation. 
For the solution $u(t) = \udi(t)  u_0 $ to the NLD in $\R$ with the initial condition 
$u(0) = u_0 \in H^{s+1}$ (on $\R$), 
we discretize the initial condition $u_0$ by $\id^{-1} \circ j_\delta$,   evolve it by the NLQW by \eqref{qw1}.
After m steps of the NLQW evolution, we can put it back to a function  by the Shannon interpolation $\id$ since it is unitary from 
$\hd$ to $H_\delta$ (see Lemma \ref{lem:shanon1}). 
Theorem \ref{thm:1} ensures that the resulting function $\id \circ \ud(m)\circ \id^{-1}\circ j_\delta u_0 $ successively approximates the solution to the NLD. In this sense, we can say that the continuous limit
of the NLQW is the NLD.
\end{remark}

Although many works discuss the continuous limit of (linear and nonlinear) QWs, it seems that they just informally compare the equation of QWs and the Dirac equation by, for instance, expanding the equation or referring the Trotter-Kato formula.
What is really needed is the estimate of the difference of the walker of the QW and solution to the Dirac equation as given in \eqref{errorest}.
The only result of such kind we are aware is \cite{ANF14JPA}, where the authors show \eqref{errorest} for $m=1$.
In this sense, our result is new even in the linear QWs.

From the viewpoint of numerical analysis, 
the NLQW 
gives a splitting method of the NLD.
Splitting methods are now popular numerical schemes for approximating semilinear Hamiltonian partial differential equations such as nonlinear Schr\"odinger equations \cite{Faou12Book}.
In this point of view, it may be interesting to investigate higher-order methods such as Strang splitting for the NLD.
However, to make our paper simple, we will not investigate them.

For the proof of Theorem \ref{thm:1}, we employ the energy method of Holden-Karlsen-Risebro-Tao \cite{HKRT11MC}, which was originally applied to the KdV equation.

%

\section{Preliminary}\label{sec:pre}

In this section, we collect technical tools which we use in the proof of Theorem \ref{thm:1}. Recall the definitions of $\<\cdot,\cdot\>_\hd$ and $\id$ given in \eqref{def:hd} and \eqref{def:i} respectively.

\begin{lemma}[\cite{BF18pre}]\label{lem:shanon1}
$\id:\hd \to H_\delta $ is unitary and 
\begin{align}\label{i1}
\id u (x)=u(x), \quad
x\in \delta\Z.
\end{align}
\end{lemma}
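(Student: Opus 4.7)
The plan is to unpack the definition $\id = F^{-1} \circ \hid \circ \mathcal F_\delta$ and verify the two assertions directly; both are essentially bookkeeping once the three factors are correctly identified.

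For unitarity, I would argue that $\id$ is a composition of three isometries, each of which is a bijection onto its target:
\begin{itemize}
\item $\mathcal F_\delta:\hd\to\hchd$ is unitary, as already recorded in the excerpt via the identity $\<\mathcal F_\delta u,\mathcal F_\delta v\>_{\hchd}=\<u,v\>_{\hd}$ (with $\mathcal F_\delta^{-1}$ given explicitly as its inverse).
\item $\hid:\hchd\to L^2(\R,\C^2)$ is the zero-extension from $[-\pi/\delta,\pi/\delta]$ to $\R$; since integrands agree on the support and vanish outside, $\|\hid u\|_{L^2}=\|u\|_{\hchd}$, and its range is exactly the subspace of $L^2$ of functions supported in $[-\pi/\delta,\pi/\delta]$.
\item $F^{-1}$ is unitary on $L^2(\R,\C^2)$ by Plancherel, and by the very definition of $H_\delta$ it sends the subspace of $L^2$ functions supported in $[-\pi/\delta,\pi/\delta]$ bijectively onto $H_\delta$.
\end{itemize}
Composing the three gives a unitary bijection $\hd\to H_\delta$.

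For the sampling identity \eqref{i1}, I would just write out, for $x\in\delta\Z$,
\begin{align*}
\id u(x) = F^{-1}\bigl(\hid\mathcal F_\delta u\bigr)(x)
= \frac{1}{\sqrt{2\pi}}\int_{-\pi/\delta}^{\pi/\delta} e^{\im x\xi}\,\mathcal F_\delta u(\xi)\,d\xi
= \mathcal F_\delta^{-1}\bigl(\mathcal F_\delta u\bigr)(x) = u(x),
\end{align*}
where the second equality uses that $\hid\mathcal F_\delta u$ vanishes outside $[-\pi/\delta,\pi/\delta]$, the third is the defining formula for $\mathcal F_\delta^{-1}$ from the excerpt, and the last uses that $\mathcal F_\delta^{-1}$ is a genuine inverse on $\hd$ (discrete Fourier inversion on $l^2(\delta\Z,\C^2)$).

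There is no real obstacle: the lemma is a structural statement about how the Shannon interpolation is assembled, and all nontrivial analytic content (Plancherel for $F$, and the discrete Plancherel/inversion for $\mathcal F_\delta$) is already packaged in the facts quoted just above the lemma. If anything, the only step worth pausing on is to confirm the compatibility of the integral formula for $F^{-1}$ on $L^2(\R,\C^2)$ restricted to a compactly-supported function with the integral formula for $\mathcal F_\delta^{-1}$ on $\hchd$ — but these are literally the same integral, written in two notations.
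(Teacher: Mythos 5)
Your proof is correct and follows essentially the same route as the paper: unitarity from the composition of the three isometries $\mathcal F_\delta$, $\hid$, $F^{-1}$ (with range $H_\delta$), and the sampling identity by writing $\id u(x)$ as the integral over $[-\pi/\delta,\pi/\delta]$ and recognizing it as $\mathcal F_\delta^{-1}\mathcal F_\delta u(x)$. The paper's version is merely terser, stating the unitarity part as ``by definition.''
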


\begin{proof}
By definition, $\id$ is an isometry and the image of $\id$ is $H_\delta$. Hence, $\id$ is unitary. 
\eqref{i1} is shown by
\begin{align*}
\id u(x)=\frac{1}{\sqrt{2\pi}}\int_\R e^{\im x \xi}  \hid \mathcal F_\delta u(\xi)\,d\xi=\frac{1}{\sqrt{2\pi}}\int_{-\pi/\delta}^{\pi/\delta} e^{\im x \xi} \mathcal F_\delta u(\xi)\,d\xi=\mathcal F_\delta^{-1}\circ \mathcal F_\delta u (x)=u(x).
\end{align*}
\end{proof}

Recall $j_\delta$ given in \eqref{def:j}.
\begin{lemma}\label{lem:j2}
For $\sigma \geq 1$,
\begin{align}\label{j2}
\|(1- j_\delta) u\|_{H^{s}}\lesssim \delta^{\sigma}\|u\|_{H^{s+\sigma}}.
\end{align}
\end{lemma}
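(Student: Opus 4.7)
The plan is to transfer the estimate to Fourier side via Plancherel and exploit the frequency cutoff that defines $j_\delta$. Since $j_\delta = F^{-1}\chi_{[-\pi/\delta,\pi/\delta]}F$, the operator $1-j_\delta$ is nothing but Fourier multiplication by $\chi_{\{|\xi|>\pi/\delta\}}$, so the claim reduces to bounding a weighted tail integral in frequency.

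First I would write, for each integer $0 \le j \le s$,
\[
\|\partial_x^j (1-j_\delta) u\|_{L^2}^2 = \int_{|\xi|>\pi/\delta} |\xi|^{2j}\,|Fu(\xi)|^2 \, d\xi,
\]
using Plancherel together with the fact that $F\partial_x^j u(\xi) = (\im\xi)^j Fu(\xi)$ and that $F$ commutes with the sharp cutoff. The next step is the key elementary observation: on the tail $\{|\xi| > \pi/\delta\}$ one has $1 \le (\delta|\xi|/\pi)^{2\sigma}$, so that
\[
|\xi|^{2j} \le (\delta/\pi)^{2\sigma} |\xi|^{2(j+\sigma)}
\qquad\text{whenever } |\xi|>\pi/\delta.
\]
Inserting this into the tail integral and applying Plancherel once more in the reverse direction yields $\|\partial_x^j(1-j_\delta)u\|_{L^2}^2 \lesssim \delta^{2\sigma}\,\|\partial_x^{j+\sigma}u\|_{L^2}^2$.

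Finally I would sum over $j=0,\dots,s$ and use the definition \eqref{norm:Hs} of the $H^s$-norm together with the trivial inclusion $\sum_{j=0}^{s}\|\partial_x^{j+\sigma}u\|_{L^2}^2 \le \sum_{k=0}^{s+\sigma}\|\partial_x^{k}u\|_{L^2}^2 = \|u\|_{H^{s+\sigma}}^2$ (valid because $\sigma\in\mathbb{N}$, as dictated by the setup in which Sobolev indices are integers). This delivers \eqref{j2}. There is no real obstacle here; the only point requiring a little care is matching the integer Sobolev convention of \eqref{norm:Hs} with the frequency-side argument, which is why I would be explicit about doing the Plancherel identity component-wise in $\C^2$ and taking $\sigma$ to be a positive integer in the summation step.
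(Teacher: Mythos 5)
Your proof is correct and takes essentially the same route as the paper's: pass to the Fourier side, observe that $1-j_\delta$ is the sharp high-frequency cutoff, and gain $\delta^\sigma$ on $\{|\xi|>\pi/\delta\}$ from the elementary inequality $1\le(\delta|\xi|/\pi)^\sigma$. The only cosmetic difference is that the paper writes the $H^s$-norm via the weight $\langle\xi\rangle^s$ (so the gain comes from $(\delta\langle\xi\rangle)^\sigma\ge 1$ on the tail), whereas you work derivative by derivative directly from the definition \eqref{norm:Hs}; the two are equivalent up to constants, and your version is if anything slightly more explicit about matching that integer-Sobolev convention.
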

\begin{proof}
By the definition of $j_\delta$ and the norm of $H^s$ given in \eqref{norm:Hs}, we have
\begin{align*}
\|(1- j_\delta)u\|_{H^{s}}=
\| \<\cdot\>^{s} 
\chi_{\{|\cdot|>\pi/\delta\}}
\mathcal F u\|_{L^2}
\lesssim \delta^{\sigma}\|\<\cdot\>^{s+\sigma} \mathcal F u\|_{\mathcal H}=\delta^{\sigma}\|u\|_{\mathcal H^{s+\sigma}},
\end{align*}
where we have used the fact that $(\delta \<\xi\>)^\sigma \geq 1$ if $|\xi| > \pi/\delta$. 
Therefore, we have \eqref{j2}.
\end{proof}

Recall $\mathcal T_{-,\delta}u(x)=u(x+\delta)$.
We set 
\begin{align}\label{differenceop}
\mathcal D_\delta:=\delta^{-1}\(\mathcal T_{-,\delta}-1\)\in \mathcal L( \hd).
\end{align}
Formally, we can write $\mathcal D_\delta=\frac{e^{\delta \partial_x}-1}{\delta}$.
Further, we set $$ D_\delta:=\frac{e^{\delta \partial_x}-1}{\delta}\in \mathcal L(L^2).$$
\begin{remark}
The operators $\mathcal D_\delta$ and $D_\delta$ formally have the same definition.
However, $\mathcal D_\delta$ is defined on $\mathcal H_\delta=l^2(\delta\Z,\C^2)$ and $D_\delta$ is defined on $L^2(\R,\C^2)$.
\end{remark}

\begin{lemma}\label{lem:difdif}
Let $s\geq 0$.
For $u \in H^{1}$, we have 
\begin{align*}
\|D_\delta u\|_{L^2}\leq \|\partial_x u\|_{L^2}.
\end{align*}
\end{lemma}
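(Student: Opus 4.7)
The plan is to move to the Fourier side and use Plancherel, reducing the estimate to a pointwise bound on the Fourier multiplier.

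First I would observe that $e^{\delta \partial_x}$ is the translation $u(\cdot) \mapsto u(\cdot + \delta)$, so its Fourier symbol is $e^{\im \delta \xi}$. Consequently $D_\delta$ acts on the Fourier side as multiplication by the symbol
\begin{align*}
m_\delta(\xi) := \frac{e^{\im \delta \xi} - 1}{\delta}.
\end{align*}
By Plancherel,
\begin{align*}
\|D_\delta u\|_{L^2}^2 = \int_\R |m_\delta(\xi)|^2 |Fu(\xi)|^2 \, d\xi,
\qquad
\|\partial_x u\|_{L^2}^2 = \int_\R |\xi|^2 |Fu(\xi)|^2 \, d\xi,
\end{align*}
so it suffices to prove the pointwise bound $|m_\delta(\xi)| \leq |\xi|$ for all $\xi \in \R$ and all $\delta > 0$.

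The pointwise bound reduces to the elementary inequality $|e^{\im t} - 1| \leq |t|$ for $t \in \R$, which follows from writing $e^{\im t} - 1 = \int_0^t \im e^{\im s} \, ds$ and taking absolute values. Applied with $t = \delta \xi$ this gives $|e^{\im \delta \xi} - 1| \leq \delta |\xi|$, hence $|m_\delta(\xi)| \leq |\xi|$. Combining this with the Plancherel identities above yields $\|D_\delta u\|_{L^2} \leq \|\partial_x u\|_{L^2}$, which is the claim. (Note that the parameter $s \geq 0$ stated in the lemma plays no role, as only the $L^2$ norm appears in the conclusion; the same Fourier argument would in fact give the stronger $H^s$-version by inserting a factor of $\langle \xi \rangle^{2s}$ under each integral.)

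There is no real obstacle here: the only point requiring care is the domain of $D_\delta$, but assuming $u \in H^1$ guarantees $Fu \in L^2(\langle \xi \rangle^2 \, d\xi)$, so the integrals above are finite and the manipulations are justified.
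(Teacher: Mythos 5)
Your proof is correct, but it takes a different route from the paper. The paper works in physical space: it writes $e^{\delta\partial_x}-1=\delta\int_0^1 e^{\delta t\partial_x}\,dt\,\partial_x$, so that $D_\delta=\bigl(\int_0^1 e^{\delta t\partial_x}\,dt\bigr)\partial_x$, and then bounds the averaged translation operator by $1$ in $\mathcal L(L^2)$ via Minkowski's integral inequality, $\|\int_0^1 v(\cdot+\delta t)\,dt\|_{L^2}\le\|v\|_{L^2}$. You work in Fourier space: by Plancherel the claim reduces to the pointwise symbol bound $|e^{\im\delta\xi}-1|\le\delta|\xi|$, which you get from the scalar identity $e^{\im t}-1=\int_0^t\im e^{\im s}\,ds$. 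The two arguments are Fourier duals of one another (your scalar integral is the symbol of the paper's operator integral), so neither is deeper than the other; the Fourier route is a bit more elementary since everything collapses to a one-variable inequality, while the physical-space route has the advantage of not invoking the Fourier transform at all and would transfer to settings where a convenient spectral representation is unavailable. Your parenthetical observation that $s$ plays no role and that the $H^s$ version follows by inserting $\langle\xi\rangle^{2s}$ is also correct.
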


\begin{proof}
By $e^{\delta \partial_x}=1+\delta  \int_0^1 e^{\delta t \partial_x }\,dt \partial_x$, we have
\begin{align*}
\|D_\delta u\|_{L^2}\leq \|\int_0^1 e^{\delta t \partial_x }\,dt\|_{\mathcal L(L^2)}\|\partial_x u\|_{L^2}.
\end{align*}
Since
\begin{align*}
\|\int_0^1 e^{\delta t \partial_x }v\,dt\|_{L^2}
=\|\int_0^1 v(\cdot+\delta t)\,dt\|_{L^2}\leq \|v\|_{L^2},
\end{align*}
we have $\|\int_0^1 e^{\delta t \partial_x }\,dt\|_{\mathcal L(L^2)}\leq 1$.
Therefore, we have the conclusion.
\end{proof}

\begin{lemma}\label{est:HdNorm}
Let $s\geq 0$ and $u\in H^{s+1}$.
Then, for $0\leq j\leq s$, we have
\begin{align}\label{HdNorm}
\delta \sum_{x\in \Z^d}\|D_\delta^j u(x)\|_{\C^2}^2\leq \|\partial_x^j u\|_{L^2}^2+2\delta\|\partial_x^j u\|_{L^2}\|\partial_x^{j+1} u\|_{L^2}.
\end{align}
\end{lemma}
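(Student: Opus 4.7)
The plan is to reduce the claim to a Riemann-sum type comparison for a single fixed function $f\in H^1(\R,\C^2)$, applied to $f=D_\delta^j u$, and then prove that comparison by a first-order telescoping argument.

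First I would record two preliminary observations. Since $D_\delta=(e^{\delta\partial_x}-1)/\delta$ is a Fourier multiplier on $L^2$, it commutes with $\partial_x$, so $\partial_x D_\delta^j u = D_\delta^j \partial_x u$. Iterating Lemma \ref{lem:difdif} then yields
\begin{align*}
\|D_\delta^j u\|_{L^2}\leq \|\partial_x^j u\|_{L^2},\qquad \|\partial_x D_\delta^j u\|_{L^2}=\|D_\delta^j \partial_x u\|_{L^2}\leq \|\partial_x^{j+1} u\|_{L^2}.
\end{align*}
Hence, with $f:=D_\delta^j u\in H^1(\R,\C^2)$, it is enough to establish the following auxiliary inequality: for every $f\in H^1(\R,\C^2)$,
\begin{align*}
\delta\sum_{n\in\Z}\|f(n\delta)\|_{\C^2}^2 \leq \|f\|_{L^2}^2 + 2\delta\|f\|_{L^2}\|f'\|_{L^2}.
\end{align*}
Combining this bound with the two operator-norm estimates above gives \eqref{HdNorm}.

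For the auxiliary inequality, I would argue via a telescoping identity on each cell $[n\delta,(n+1)\delta)$. Writing $\delta\|f(n\delta)\|^2=\int_{n\delta}^{(n+1)\delta}\|f(n\delta)\|^2\,dy$ and using
\begin{align*}
\|f(n\delta)\|_{\C^2}^2 - \|f(y)\|_{\C^2}^2 = -\int_{n\delta}^{y}\frac{d}{ds}\|f(s)\|_{\C^2}^2\,ds = -\int_{n\delta}^{y}2\Re\langle f'(s),f(s)\rangle_{\C^2}\,ds,
\end{align*}
I obtain
\begin{align*}
\delta\sum_{n\in\Z}\|f(n\delta)\|_{\C^2}^2 - \|f\|_{L^2}^2 = -\sum_{n\in\Z}\int_{n\delta}^{(n+1)\delta}\!\!\int_{n\delta}^{y}2\Re\langle f'(s),f(s)\rangle_{\C^2}\,ds\,dy.
\end{align*}
Applying Fubini on each cell gives $\int_{n\delta}^{(n+1)\delta}\!\int_{n\delta}^{y}G(s)\,ds\,dy=\int_{n\delta}^{(n+1)\delta}((n+1)\delta-s)G(s)\,ds$, which is bounded by $\delta\int_{n\delta}^{(n+1)\delta}G(s)\,ds$. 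Summing over $n$ and applying Cauchy--Schwarz produces the desired $2\delta\|f\|_{L^2}\|f'\|_{L^2}$ on the right-hand side.

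The only subtle point is that a cruder route, namely writing $\delta\sum\|f(n\delta)\|^2 = \|g\|_{L^2}^2$ for the step function $g$ and expanding $\|g\|^2=\|f\|^2+\|g-f\|^2+2\Re\langle g-f,f\rangle$, produces an unwanted residual $\delta^2\|f'\|_{L^2}^2$. The first-order telescoping above avoids this residual and gives exactly the claimed bound without an extra $\delta^2$ term.
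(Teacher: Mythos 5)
Your argument is correct and essentially the same as the paper's: the key step is the identical Fubini/telescoping estimate on each cell $[n\delta,(n+1)\delta]$ for the $j=0$ case, combined with Lemma \ref{lem:difdif} and commutativity of $D_\delta$ with $\partial_x$ to handle general $j$. The only cosmetic difference is that the paper organizes the reduction to $j=0$ as an induction on $j$ (applying the inductive hypothesis to $D_\delta u$), while you apply the $j=0$ bound once to $f=D_\delta^j u$ and iterate Lemma \ref{lem:difdif} $j$ times in one shot.
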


\begin{remark}
By Sobolev embedding, we have $ u \in H^1(\R,\C^2)\hookrightarrow
C^1(\R,\C^2)
$.
Therefore $u$ is defined pointwise and $ D_\delta^j u (x)$ has a meaning.
\end{remark}

\begin{proof}
We first prove the case $j=0$.
Fix $x\in \delta\Z$.
Set $F_x(t)=\|u(x+t)\|_{\C^2}^2$.
Then, since $\partial_t F_x=2\Re\<u(x+t),\partial_x u(x+t)\>_{\C^2}\in L^1(\R)$, we have
\begin{align*}
F_x(t)=F_x(0)+\int_0^t \partial_t F_x(s)\,ds.
\end{align*}
By the Fubini Theorem,
\begin{align*}
\delta\sum_{x\in \delta\Z}\|u(x)\|_{\C^2}^2 
&=\sum_{x\in\delta\Z}\int_0^\delta F_x(0)\,
dt
= \sum_{x\in \delta\Z}\int_0^\delta F_x(t)\,dt-\sum_{x\in\delta\Z}\int_0^\delta \int_0^t \partial_t F_x(s) \,ds dt \\
&= \|u\|_{L^2}^2-2 \Re\sum_{x\in \delta\Z}\int_0^{\delta}(\delta-s)\<u(x+s),\partial_x u(x+s)\>_{\C^2}ds\\
&\leq 
\|u\|_{L^2}^2
+2 \delta \int_\R \|u(x)\|_{\C^2}\|\partial_x u(x)\|_{\C^2}\,dx.
\end{align*}
Therefore, by Schwartz, we have the conclusion.

Next, for $j\geq 1$, assume that we have \eqref{HdNorm} for $j-1$.
Then, by Lemma \ref{lem:difdif}, since $\partial_x$ and $D_\delta$ commute, we have
\begin{align*}
\delta\sum_{x\in \Z^d}\|  D_\delta^j u(x)\|_{\C^2}^2&\leq \|\partial_x^{j-1} D_\delta u\|_{L^2}^2+2\delta\|\partial_x^{j-1} D_\delta u\|_{L^2}\|\partial_x^j D_\delta u\|_{L^2}\\&\leq \|\partial_x^j u\|_{L^2}^2+2\delta\|\partial_x^j u\|_{L^2}\|\partial_x^{j+1} u\|_{L^2}.
\end{align*}
Therefore, we have the conclusion.
\end{proof}

\begin{lemma}\label{lem:difdif2}
Let $u\in H_\delta$.
Then, we have
\begin{align*}
\|\partial_x^j u\|_{L^2}\sim \| D_\delta^j u\|_{L^2}.
\end{align*}
Here, the implicit constant is independent of $\delta$.
\end{lemma}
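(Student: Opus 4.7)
The plan is to argue on the Fourier side using Plancherel's identity. Since $u \in H_\delta$ means $Fu$ is supported in $[-\pi/\delta,\pi/\delta]$, I can restrict the integrals to this interval throughout.

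First I would compute the Fourier symbol of $D_\delta$. Because $D_\delta = \delta^{-1}(e^{\delta \partial_x}-1)$, we have $F(D_\delta u)(\xi) = \frac{e^{\im \delta \xi}-1}{\delta}\,Fu(\xi)$, and hence $F(D_\delta^j u)(\xi) = \bigl(\frac{e^{\im\delta\xi}-1}{\delta}\bigr)^j Fu(\xi)$, while $F(\partial_x^j u)(\xi) = (\im\xi)^j Fu(\xi)$. By Plancherel,
\begin{align*}
\|\partial_x^j u\|_{L^2}^2 = \int_{-\pi/\delta}^{\pi/\delta} |\xi|^{2j}\,|Fu(\xi)|^2\,d\xi,\qquad \|D_\delta^j u\|_{L^2}^2 = \int_{-\pi/\delta}^{\pi/\delta} \Bigl|\tfrac{e^{\im\delta\xi}-1}{\delta}\Bigr|^{2j}|Fu(\xi)|^2\,d\xi.
\end{align*}

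Next I would compare the two symbols pointwise on $[-\pi/\delta,\pi/\delta]$. Using $e^{\im\delta\xi}-1 = 2\im e^{\im\delta\xi/2}\sin(\delta\xi/2)$, the substitution $\eta := \delta\xi/2 \in [-\pi/2,\pi/2]$ gives
\begin{align*}
\Bigl|\tfrac{e^{\im\delta\xi}-1}{\delta}\Bigr| = \tfrac{2|\sin(\delta\xi/2)|}{\delta} = |\xi|\cdot\tfrac{|\sin\eta|}{|\eta|}.
\end{align*}
The factor $|\sin\eta|/|\eta|$ is bounded on $[-\pi/2,\pi/2]$ between $2/\pi$ and $1$ (Jordan's inequality for the lower bound, and the elementary bound $|\sin\eta|\le|\eta|$ for the upper). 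Therefore $\frac{2}{\pi}|\xi|\le\bigl|\frac{e^{\im\delta\xi}-1}{\delta}\bigr|\le|\xi|$ for every $\xi$ in the support of $Fu$, which is the key inequality.

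Raising to the $2j$-th power, integrating against $|Fu(\xi)|^2\,d\xi$, and appealing to Plancherel once more yields
\begin{align*}
(2/\pi)^{j}\|\partial_x^j u\|_{L^2}\le\|D_\delta^j u\|_{L^2}\le\|\partial_x^j u\|_{L^2},
\end{align*}
with constants independent of $\delta$, which is exactly the claim. I do not anticipate any real obstacle: the only subtlety is that the band-limitation $u\in H_\delta$ keeps the ratio $|\sin\eta|/|\eta|$ bounded away from zero, so the equivalence would fail without that hypothesis.
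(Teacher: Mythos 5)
Your argument is correct and follows essentially the same route as the paper: pass to the Fourier side via Plancherel, and compare the symbol of $D_\delta$ with that of $\partial_x$ pointwise on the band $[-\pi/\delta,\pi/\delta]$. The only cosmetic difference is that you bound the ratio of symbols via the identity $e^{\im\delta\xi}-1=2\im e^{\im\delta\xi/2}\sin(\delta\xi/2)$ and Jordan's inequality (yielding explicit constants $2/\pi$ and $1$), whereas the paper writes $\frac{e^{\im\delta\xi}-1}{\delta}=\xi\int_0^1 e^{\im\delta\xi t}\,dt$ and invokes $|\int_0^1 e^{\im\eta t}\,dt|\sim 1$ for $|\eta|\le\pi$.
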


\begin{proof}
First, for $|\eta|\leq \pi$, we have
\begin{align*}
|\int_0^1e^{\im \eta t}\,dt|\sim 1.
\end{align*}
We have
\begin{align*}
\|D_\delta^j u\|_{L^2}^2&=\int_{-\pi/\delta}^{\pi/\delta}\left|\frac{e^{\im \delta \xi}-1}{\delta}\right|^{2j} \|\hat u(\xi)\|_{\C^2}^2\,d\xi=\int_{-\pi/\delta}^{\pi/\delta}|\xi \int_0^1 e^{\im \delta \xi t}\,dt|^{2j} \|\hat u(\xi)\|_{\C^2}^2\,d\xi\\
&\sim\int_{-\pi/\delta}^{\pi/\delta}|\xi|^{2j} \|\hat u(\xi)\|_{\C^2}^2\,d\xi=\|\partial_x^j u\|_{L^2}^2.
\end{align*}
Therefore, we have the conclusion.
\end{proof}

\begin{proposition}\label{prop:shanon}
Let $s\geq 0$, $\sigma\geq 1$.
Let $u\in H_\delta$ and $v\in H^{s+\sigma}$ with $u(x)=v(x)$ for all $x\in \delta\Z\subset \R$.
Then, we have
\begin{align}\label{differencepoint}
\|u- v\|_{H^s}\lesssim  \delta^\sigma \|v\|_{H^{s+\sigma}}.
\end{align}
Here, the implicit constant is independent of $u,v$ and $\delta$.
\end{proposition}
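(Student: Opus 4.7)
The plan is to decompose
\begin{align*}
u - v = (u - j_\delta v) - (1 - j_\delta)v,
\end{align*}
so that Lemma \ref{lem:j2} already delivers $\|(1-j_\delta)v\|_{H^s} \lesssim \delta^\sigma \|v\|_{H^{s+\sigma}}$. The task then reduces to showing $\|u - j_\delta v\|_{H^s} \lesssim \delta^\sigma \|v\|_{H^{s+\sigma}}$, which is exactly an aliasing-error estimate for the Shannon interpolant.

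To set it up, writing $r_\delta w := w|_{\delta\Z}$, I would first note that for any $w \in H_\delta$ one has $\id(r_\delta w) = w$ (the Fourier transform $Fw$ is already supported in $[-\pi/\delta,\pi/\delta]$, so no aliasing occurs in applying $\mathcal F_\delta$ to $r_\delta w$). Since $r_\delta u = r_\delta v$ by hypothesis, this yields $u = \id(r_\delta v)$. Next I would invoke Poisson summation to compute the discrete Fourier transform of $r_\delta v$: viewing $\xi \mapsto \sum_{k \in \Z} Fv(\xi + 2\pi k/\delta)$ as a $(2\pi/\delta)$-periodic function and computing its Fourier coefficients gives
\begin{align*}
\mathcal F_\delta(r_\delta v)(\xi) = \sum_{k \in \Z} Fv(\xi + 2\pi k/\delta), \qquad \xi \in [-\pi/\delta, \pi/\delta],
\end{align*}
where the hypothesis $\sigma \geq 1$ ensures $v$ is continuous (and $Fv$ decays well enough) to justify the manipulation. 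Combining this with $(Fj_\delta v)(\xi) = \chi_{[-\pi/\delta,\pi/\delta]}(\xi) Fv(\xi)$ produces the key identity
\begin{align*}
F(u - j_\delta v)(\xi) = \chi_{[-\pi/\delta,\pi/\delta]}(\xi) \sum_{k \neq 0} Fv(\xi + 2\pi k/\delta).
\end{align*}

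The final step is to bound this aliasing sum in $H^s$. For $|\xi| \leq \pi/\delta$ and $k \neq 0$ one has $|\xi + 2\pi k/\delta| \geq |k|\pi/\delta$, hence
\begin{align*}
\sum_{k \neq 0} \<\xi + 2\pi k/\delta\>^{-2(s+\sigma)} \lesssim \delta^{2(s+\sigma)} \sum_{k \neq 0} |k|^{-2(s+\sigma)} \lesssim \delta^{2(s+\sigma)},
\end{align*}
the last sum converging because $s+\sigma \geq 1$. Cauchy--Schwarz at weight $\<\xi + 2\pi k/\delta\>^{2(s+\sigma)}$, followed by the pointwise bound $\<\xi\>^{2s} \leq \<\pi/\delta\>^{2s} \lesssim \delta^{-2s}$ on the band and a change of variables $\eta = \xi + 2\pi k/\delta$ (whose images tile $\R \setminus [-\pi/\delta,\pi/\delta]$), yields
\begin{align*}
\|u - j_\delta v\|_{H^s}^2 \lesssim \delta^{-2s} \cdot \delta^{2(s+\sigma)} \int_\R \<\eta\>^{2(s+\sigma)} |Fv(\eta)|^2 \, d\eta \sim \delta^{2\sigma} \|v\|_{H^{s+\sigma}}^2,
\end{align*}
which combines with the $j_\delta$-piece to give \eqref{differencepoint}.

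The main obstacle I anticipate is the delicate power-counting in $\delta$: the factor $\delta^{-2s}$ from $\<\xi\>^{2s}$ on the strip must be exactly cancelled by a matching piece of the $\delta^{2(s+\sigma)}$ coming out of the aliasing sum, which is why Cauchy--Schwarz is taken at the shifted weight $s+\sigma$ rather than the naive $\sigma$. A secondary technical point is justifying the Poisson-summation identity at Sobolev regularity $s+\sigma$, which is the structural reason the hypothesis $\sigma \geq 1$ (guaranteeing continuity of $v$) appears.
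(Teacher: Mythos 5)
Your proof is correct, and it takes a genuinely different route from the paper's. Both arguments open with the same decomposition $u - v = (u - j_\delta v) - (1-j_\delta)v$ and dispatch the tail $(1-j_\delta)v$ via Lemma~\ref{lem:j2}. For the aliasing piece $u - j_\delta v$ you work entirely on the Fourier side: you recognize $u = \id(v|_{\delta\Z})$, invoke Poisson summation to write $\mathcal F_\delta(v|_{\delta\Z})$ as the $(2\pi/\delta)$-periodization of $Fv$, isolate the nonzero-frequency contributions, and close with a Cauchy--Schwarz at weight $\<\cdot\>^{s+\sigma}$ together with the tiling change of variables $\eta = \xi + 2\pi k/\delta$. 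The paper instead stays in physical space: it notes that $u-j_\delta v$ agrees on $\delta\Z$ with $(j_\delta-1)v$, uses the isometry of $\id$ and the discrete/continuous norm equivalence $\|\partial_x^j w\|_{L^2}\sim\|D_\delta^j w\|_{L^2}$ on $H_\delta$ from Lemma~\ref{lem:difdif2} to reduce to the $\hd$-norm of $D_\delta^j(j_\delta-1)v$, then applies Lemma~\ref{est:HdNorm} to control that by $\|(j_\delta-1)\partial_x^j v\|_{L^2}$ and $\|(j_\delta-1)\partial_x^{j+1}v\|_{L^2}$, each of which Lemma~\ref{lem:j2} converts into the desired power of $\delta$. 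Your route is the classical Shannon aliasing-error estimate and avoids the finite-difference machinery of Lemmas~\ref{lem:difdif}, \ref{est:HdNorm}, and \ref{lem:difdif2}, at the minor cost of justifying the Poisson-summation identity at regularity $H^{s+\sigma}$ (which $s+\sigma\geq 1$ indeed secures); the paper's route reuses the difference-operator lemmas it develops anyway for the later convergence analysis, and the hypothesis $\sigma\geq 1$ enters there through the extra derivative $\partial_x^{j+1}v$ produced by Lemma~\ref{est:HdNorm} rather than through a periodization argument.
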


\begin{remark}
Notice that the right hand side of \eqref{differencepoint} 
does 
not depend on $u$.
\end{remark}

\begin{proof}
First, by Lemma \ref{lem:j2}, we have
\begin{align}\label{pr:sha1}
\|u-v\|_{H^s}\lesssim \|u-j_\delta v\|_{H^s}+\delta^\sigma\|v\|_{H^{s+\sigma}}.
\end{align}
For $0\leq j\leq s$, since $u-j_\delta v\in H_\delta$, 
by Lemma \ref{lem:difdif2} we have
\begin{align*}
\|\partial_x^j(u-j_\delta v)\|_{L^2}\sim \|D_\delta^j(u-j_\delta v)\|_{L^2}.
\end{align*}
By Lemmas \ref{lem:shanon1}, \ref{lem:j2} and \ref{est:HdNorm}, we have
\begin{align*}
&\|D_\delta^j(u-j_\delta v)\|_{L^2}^2=\|\mathfrak I_\delta^{-1}\circ D_\delta^j\( u-j_d v\)\|_{ \hd}^2=\delta\sum_{x\in \Z_d}\| D_\delta^j\( u-j_d v\)(x)\|_{\C^2}^2\\&=\delta\sum_{x\in \Z^d}\|  D_\delta^j(j_\delta-1)v(x)\|_{\C^2}^2
\lesssim \|(j_\delta-1)\partial_x^j v\|_{L^2}^2+\delta \|(j_\delta-1)\partial_x^jv\|_{L^2}
\|(j_\delta-1)\partial_x^{j+1} v\|_{L^2}
\\
&\lesssim \delta^{2\sigma} 
\|v\|_{H^{j+\sigma}}^2.
\end{align*}
Therefore, we have the conclusion.
\end{proof}

%
%

\section{Proof of Theorem \ref{thm:1}}\label{sec:Proof}

In this section, we prove Theorem \ref{thm:1}.
In the following, as claimed in Theorem \ref{thm:1}, we fix $T,L>0$ and $s\geq 1$ and assume
\begin{align}\label{Assumption,thm:1}
\|u_0\|_{H^{s+1}}\leq L,\ \|\mathbf s\|_{L^{\infty}(\R,\R^4)}+\|\mathbf s'\|_{H^s(\R,\R^4)}<\infty
\end{align}
and 
\begin{align}\label{Assumption,thm:1,2}
(T,L)\ \text{satisfies condition } (\mathrm{Lip})_s.
\end{align}
Since $\mathbf s$ is fixed, we will not denote the dependence of $\|\mathbf s\|_{L^{\infty}(\R,\R^4)}+\|\mathbf s'\|_{H^s(\R,\R^4)}$ in the implicit constant in the inequalities below.

We start with decomposing $\| \hid \circ \ud(m)\circ \hid^{-1}\circ j_\delta u_0 -\udi(m \delta)  u_0 \|_{ H^s}$ as
\begin{equation}\label{Tri1}
\begin{aligned}
&\| 
\hid 
\circ \ud(m) \circ \hid^{-1}\circ j_\delta u_0 -\udi(m \delta)  u_0\|_{ H^s}\\&\quad\leq 
\|
\hid 
\circ \ud(m) \circ \hid^{-1}\circ j_\delta u_0 -\udi(m \delta)  j_\delta u_0\|_{ H^s}+\|\udi (m \delta)  j_\delta u_0-\udi (m \delta)  u_0\|_{ H^s}.
\end{aligned}
\end{equation}
For $m \delta\leq T$, one can estimate the second term of the right hand side of \eqref{Tri1} by
the assumptions \eqref{Assumption,thm:1} and \eqref{Assumption,thm:1,2}.
Indeed, 
by Lemma \ref{lem:j2}, 
\begin{align}\label{pthm1:1}
\|\udi(m \delta)  j_\delta u_0-\udi(m \delta)  u_0\|_{H^s}\lesssim_{T,L} \|(j_\delta-1) u_0\|_{ H^s}\lesssim_{T, L} \delta .
\end{align}
We further decompose the first term of \eqref{Tri1} as
\begin{equation}
\begin{aligned}\label{pthm2:2}
&\|
\hid 
\circ \ud(m) \circ \hid^{-1}\circ j_\delta u_0 -\udi(m \delta)  j_\delta u_0\|_{ H^s}\\&\quad\leq
 \|
\hid 
 \circ \ud(m) \circ \hid^{-1}\circ j_\delta u_0 - U_\delta(m) j_\delta u_0\|_{ H^s}+\| U_\delta(m) j_\delta u_0 - \udi(m \delta)  j_\delta u_0\|_{ H^s},
\end{aligned}
\end{equation}
where
\begin{align}\label{def:qwincont}
U_\delta(0)u_0=u_0,\quad
U_\delta(m+1)u_0= S_\delta	 C_\delta  N_\delta\(U_\delta(m)u_0\).
\end{align}
and
\begin{align}\label{def:qwincont2}
S_\delta:=\begin{pmatrix} e^{-\delta \partial_x}& 0 \\ 0 & e^{\delta \partial_x} \end{pmatrix}, \ C_\delta:=e^{-\im \delta \mathbf s(\cdot)\cdot \bsig} \text{ and } N_\delta:=e^{-\im g(\<\cdot,\gamma \cdot\>_{\C^2})\gamma}\cdot.
\end{align}

\begin{remark}
$U_\delta$, $S_\delta$, $C_\delta$ and $N_\delta$ are the continuous counterparts of $\ud$, $\sd$, $\cd$ and $\nd$ respectively.
That is, $U_\delta$, $S_\delta$, $C_\delta$ and $N_\delta$ is defined on $L^2(\R,\C^2)$ instead of $\mathcal H_\delta$ with formally the same definition as $\ud$, $\sd$, $\cd$ and $\nd$.
\end{remark}

We next bound the second term in \eqref{pthm2:2} following Holden-Karlsen-Risebro-Tao \cite{HKRT11MC}.
To this end, we introduce $v_\delta(t_1,t_2,t_3)$ as follows. 
Let 
\begin{align}\label{def:omega1}
\Omega_\delta =\cup_{m\in \Z_{\geq 0}}\Omega_\delta^m,
\end{align}
where $\Omega_\delta^m:=[m \delta, (m+1)\delta]^3$. 
We define self-adjoint operators $A$ and $B$
as
\[  A = -\im \sigma_3 \partial_x,
	\quad B = \mathbf {s}\cdot \bsig. \]
We define a nonlinear operator $G$ as
\[ G(v) = 	g(\<v, \gamma  v \>_{\C^2})\gamma  v,
	\quad v \in L^2(\R,\C^2). \]
Let $v_\delta(0,0,0)= j_\delta u_0 \in L^2(\R,\C^2)$ and define
$v_\delta(t_1,t_2,t_3)\in L^2(\R,\C^2)$ for $(t_1,t_2,t_3)\in \Omega_\delta $ by
\begin{equation}\label{def:v}
\begin{aligned}
\im \partial_{t_1} v_\delta
&= G(v_\delta),\quad t_2=t_3=\delta m,\\
\im \partial_{t_2} v_\delta 
&= Bv_\delta,\quad t_3=\delta m,\\
\im \partial_{t_3} v_\delta 
& = Av_\delta.
\end{aligned}
\end{equation}
More precisely, given the value of $v_{\delta}(m \delta,m \delta, m \delta)$,  we are defining $v_{\delta}(\tilde t_1,\tilde t_2,\tilde t_3)$ for $(\tilde t_1,\tilde t_2,\tilde t_3)\in \Omega_\delta^m$ by first solving the first equation of \eqref{def:v} in the $t_1$ direction up to $t_1=\tilde t_1$ and then solve the second equation of  \eqref{def:v} in the $t_2$ direction up to $t_2=\tilde t_2$ and finally solve the third equation of \eqref{def:v} in the $t_3$ direction up to $t_3=\tilde t_3$.
By this procedure we can define $v_{\delta}((m+1)\delta,(m+1)\delta,(m+1)\delta)$ and thus we can define the value of $v_\delta$ for all $(t_1,t_2,t_3)\in \Omega_\delta$ by induction because $v_\delta(0,0,0)=j_\delta	 u_0$ is given.

\begin{remark}
We note that $v_\delta=v_\delta(t_1,t_2,t_3,x)$ is a $\C^2$-valued function defined on $\Omega_\delta \times \R$.
However, since we want to view $v_\delta$ as an $L^2(\R,\C^2)$-valued function on $\Omega_\delta$, we write $v_\delta=v_\delta(t_1,t_2,t_3)$ and suppress the dependence on the spatial variable $x$.
We further remark that the differential operator $A=-\im \sigma_3 \partial_x$ acts on this spatial variable $x$.
\end{remark}

\begin{lemma}\label{lem:vdeltaisU}
Let $v_\delta$ be the solution to \eqref{def:v} with $v_\delta(0,0,0)= j_\delta u_0$.
Then, $v_\delta$ correspond to $U_\delta(\cdot) j_\delta u_0$ at the diagonal lattice point.
That is, we have
\begin{align}\label{vdeltaisU}
v_\delta(\delta m, \delta m, \delta m)= U_\delta(m) j_\delta u_0.
\end{align}
\end{lemma}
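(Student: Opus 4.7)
The plan is to induct on $m$, integrating the three equations of \eqref{def:v} in sequence over the cube $\Omega_\delta^m = [m\delta, (m+1)\delta]^3$. The base case $m=0$ is immediate from $v_\delta(0,0,0) = j_\delta u_0 = U_\delta(0) j_\delta u_0$. For the inductive step, I would fix $m$, assume $v_\delta(m\delta, m\delta, m\delta) = U_\delta(m) j_\delta u_0$, and propagate in the order prescribed by the construction: first in $t_1$, then in $t_2$, then in $t_3$.

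The first equation is a nonlinear ODE in $t_1$ at each fixed $x$. The key observation is that the scalar quantity $\<v_\delta(t_1, m\delta, m\delta, x), \gamma v_\delta(t_1, m\delta, m\delta, x)\>_{\C^2}$ is conserved along the $t_1$-flow: differentiating and using that $\gamma$ is Hermitian and $g$ is real-valued, one obtains $\partial_{t_1}\<v_\delta, \gamma v_\delta\>_{\C^2} = 0$. Hence the first ODE integrates explicitly, and at $t_1 = (m+1)\delta$ its solution equals $N_\delta v_\delta(m\delta, m\delta, m\delta)$ by the definition of $N_\delta$ in \eqref{def:qwincont2}. The second ODE in $t_2$ is pointwise multiplication by the Hermitian matrix-valued symbol $B = \mathbf{s}\cdot\bsig$, whose time-$\delta$ flow is multiplication by $e^{-\im\delta B} = C_\delta$. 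The third ODE in $t_3$ is the linear transport generated by $A = -\im\sigma_3\partial_x$, whose time-$\delta$ flow is $e^{-\im\delta A} = e^{-\delta\sigma_3\partial_x} = S_\delta$.

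Composing the three flows yields
\begin{align*}
v_\delta((m+1)\delta, (m+1)\delta, (m+1)\delta) = S_\delta C_\delta N_\delta\, v_\delta(m\delta, m\delta, m\delta),
\end{align*}
and invoking the inductive hypothesis together with the recurrence \eqref{def:qwincont} completes the induction. The only step that is not pure bookkeeping is the conservation of $\<v_\delta, \gamma v_\delta\>_{\C^2}$ along the $t_1$-flow, which is a short direct computation using self-adjointness of $\gamma$; the rest is matching operator exponentials with the definitions in \eqref{def:qwincont2}.
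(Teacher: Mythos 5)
Your proof is correct and follows essentially the same route as the paper: induction on $m$, composition of the three flows in the prescribed order, with the observation that $\<v_\delta, \gamma v_\delta\>_{\C^2}$ is conserved along the $t_1$-flow (by self-adjointness of $\gamma$ and reality of $g$) as the one non-trivial step that freezes the nonlinearity and lets the first ODE integrate to $N_\delta$.
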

\begin{proof}
Recall \eqref{def:qwincont} and \eqref{def:qwincont2}.
We prove \eqref{vdeltaisU} by induction.
Thus, we can assume \eqref{vdeltaisU}.
Our goal will be to show \eqref{vdeltaisU} with $m$ replaced by $m+1$.
We first show
\begin{align}\label{1ststep}
v_{\delta}((m+1)\delta,m \delta, m \delta)=N_\delta(U_\delta(m)j_\delta u_0).
\end{align}
By the first equation of \eqref{def:v}, 
\begin{align*}
\frac{d}{dt_1} \<v_{\delta}(t_1,\delta m, \delta m), \gamma v_{\delta}(t_1,\delta m, \delta m)\> 
&=\<-\im G(v_{\delta}), \gamma v_{\delta}\> -\<v_{\delta}, \im \gamma G(v_{\delta})\> \\&
=\<-\im g(\<v_{\delta}, \gamma  v_{\delta}\>_{\C^2})\gamma  v_{\delta}, \gamma v_{\delta}\> -\<v_{\delta}, \im g(\<v_{\delta}, \gamma  v_{\delta}\>_{\C^2}) \gamma^2  v_{\delta}\>=0.
\end{align*}
Hence, $\<v_{\delta}(t_1,\delta m, \delta m), \gamma v_{\delta}(t_1,\delta m, \delta m)\>$
conserves. 
By \eqref{def:qwincont2} and the first equation of \eqref{def:v} again, 
we obtain \eqref{1ststep}.
Similarly, from
the second and third equations of \eqref{def:v},
we can prove
\begin{align*}
v_\delta((m+1)\delta,(m+1)\delta,m \delta)=C_\delta N_\delta(U_\delta(m)j_\delta u_0),
\end{align*}
and
\begin{align*}
v_\delta((m+1)\delta,(m+1)\delta,(m+1) \delta)=S_\delta	C_\delta N_\delta(U_\delta(m)j_\delta u_0).
\end{align*}
Therefore, we have the conclusion.
\end{proof}
Setting $v_\delta(t):=v_\delta(t,t,t)$, we show the following proposition.
\begin{proposition}\label{prop:mainthm1}
For sufficiently small $\delta>0$,
we have 
\begin{align}\label{eq:mainthm1}
\sup_{t\in[0,T]}\|v_\delta(t)-u_\delta(t)\|_{ H^s}\lesssim_{T,L} \delta,
\end{align}
where $u_\delta(t):=\udi(t) j_\delta u_0$.
\end{proposition}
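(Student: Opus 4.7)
The approach is the standard energy/Gronwall scheme for operator splittings: derive a perturbed version of the NLD satisfied by $v_\delta(t):=v_\delta(t,t,t)$, estimate the resulting $H^s$-remainder at order $\delta$, and close by Gronwall. On each cube $\Omega_\delta^m$, writing $\tau = t-m\delta \in [0,\delta]$, the sequential definition \eqref{def:v} rewrites $v_\delta(t) = e^{-\im\tau A} e^{-\im\tau B} \Phi^G_\tau(v_\delta(m\delta))$, where $\Phi^G_\tau$ denotes the nonlinear flow of $\im \partial_s \varphi = G(\varphi)$. Differentiating in $t$ produces
\[ \im \partial_t v_\delta(t) = (A+B) v_\delta(t) + G(v_\delta(t)) + R_\delta(t), \]
where
\[ R_\delta(t) = [e^{-\im\tau A}, B]\, e^{-\im\tau B}\Phi^G_\tau(v_\delta(m\delta)) + \Bigl(e^{-\im\tau A} e^{-\im\tau B} G(\Phi^G_\tau(v_\delta(m\delta))) - G(v_\delta(t))\Bigr) \]
gathers the commutator errors arising from the non-commutativity of $A$, $B$, and $G$.

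Bounding $\|R_\delta(t)\|_{H^s}\lesssim \delta$ is the heart of the argument. Since $[A,B] = -\im\sigma_3 B' - \im[\sigma_3,B]\partial_x$ is first-order with coefficients controlled by \eqref{Assumption,thm:1}, a Duhamel expansion of $C(\tau) := e^{-\im\tau A} B e^{\im\tau A}$ around $\tau=0$ yields $\|[e^{-\im\tau A}, B]w\|_{H^s} \lesssim \tau \|w\|_{H^{s+1}}$. Writing $T = e^{-\im\tau A} e^{-\im\tau B}$, the nonlinear piece is $TG(w) - G(Tw) = (T-\mathrm{Id})G(w) - (G(Tw) - G(w))$, which I would bound by combining $\|(T-\mathrm{Id})w\|_{H^s} \lesssim \tau \|w\|_{H^{s+1}}$ (same Duhamel, noting $B$ is zeroth-order) with the local Lipschitz estimate $\|G(w_1) - G(w_2)\|_{H^s} \lesssim_M \|w_1 - w_2\|_{H^s}$, valid on balls of $H^s$ because $s\geq 1$ makes $H^s$ an algebra and $g$ is smooth. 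Both contributions cost one extra derivative, so I obtain $\|R_\delta(t)\|_{H^s} \lesssim_{M}\delta \|v_\delta(t)\|_{H^{s+1}}$. A uniform bound $\sup_{t\in[0,T]}\|v_\delta(t)\|_{H^{s+1}} \leq M_{T,L}$ is therefore needed; to prove it I would observe that each constituent of the splitting maps $H^{s+1}$ to itself with operator norm $1+O(\delta)$ (the shift $e^{-\im \delta A}$ isometrically on every $H^\sigma$, multiplication by $e^{-\im \delta B}$ by the algebra property and the regularity of $\mathbf s$, and $\Phi^G_\delta$ by an $H^{s+1}$-energy estimate on $\im\partial_s\varphi = G(\varphi)$), and then bootstrap over $\lfloor T/\delta\rfloor$ steps, starting from $\|j_\delta u_0\|_{H^{s+1}}\leq L$.

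To close, set $w_\delta = v_\delta - u_\delta$, which satisfies $w_\delta(0) = 0$ and $\im \partial_t w_\delta = (A+B) w_\delta + (G(v_\delta) - G(u_\delta)) + R_\delta$. Because $A+B$ is essentially self-adjoint and the commutators $[\partial_x^j, A+B]$ are of order $j-1$ with coefficients bounded by \eqref{Assumption,thm:1}, the standard $H^s$ energy identity, combined with local Lipschitz continuity of $G$ on the $H^s$-balls containing $v_\delta$ and $u_\delta$, gives
\[ \tfrac{d}{dt}\|w_\delta\|_{H^s}^2 \lesssim_{T,L} \|w_\delta\|_{H^s}^2 + \|R_\delta\|_{H^s}^2. \]
Gronwall together with $\|R_\delta\|_{H^s} \lesssim_{T,L} \delta$ from the previous paragraph then yields \eqref{eq:mainthm1}. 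The main obstacle I foresee is the $H^{s+1}$ propagation through the nonlinear splitting step $\Phi^G_\delta$: the commutator $[A,\cdot]$ costs exactly one derivative, which forces carrying $H^{s+1}$ regularity uniformly in $\delta$ and $m$ — this is precisely why the hypothesis of the proposition assumes $u_0 \in H^{s+1}$ rather than just $H^s$.
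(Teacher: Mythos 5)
Your overall strategy — view $v_\delta$ as a perturbed solution of the NLD, bound the local truncation error $R_\delta$ in $H^s$ by $\delta$ at the cost of one extra derivative, and close by Gronwall — is essentially the paper's (the Holden--Karlsen--Risebro--Tao splitting argument). Your way of extracting the truncation error by writing $v_\delta(t)=e^{-\im\tau A}e^{-\im\tau B}\Phi^G_\tau(v_\delta(m\delta))$ and expanding commutators is a perfectly acceptable reorganization of what the paper does with the auxiliary quantities $F_{12}$ and $F_1$; the commutator bounds you state match the paper's Lemma \ref{lem:com}, and the final Gronwall step matches Lemma \ref{lem:vboundimplyerror}.

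The gap is in the claimed uniform $H^{s+1}$ bound. You assert that ``each constituent of the splitting maps $H^{s+1}$ to itself with operator norm $1+O(\delta)$'' and then ``bootstrap over $\lfloor T/\delta\rfloor$ steps.'' For $e^{-\im\delta A}$ (isometry on every $H^\sigma$) and $e^{-\im\delta B}$ this is fine with a uniform constant, but $\Phi^G_\delta$ is \emph{nonlinear}, and the $H^{s+1}$ energy estimate you invoke gives only
$\|\Phi^G_\delta(v)\|_{H^{s+1}}\leq e^{C(\|v\|_{H^s})\,\delta}\|v\|_{H^{s+1}}$, with a constant that grows with the $H^s$ norm of the current iterate (the derivatives of $g(\<v,\gamma v\>_{\C^2})$ up to order $s+1$ involve products of derivatives of $v$ up to order $s$, and these are not conserved by the flow, unlike $|v(x)|$). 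So the per-step constant is not uniform in $m$, and the naive iteration $a_{m+1}\le e^{C(a_m)\delta}a_m$ is the Euler scheme for $a'=C(a)a$, which can blow up before time $T$; nothing in your argument ties the $v_\delta$-recursion's lifespan to the $T$ of hypothesis $(\mathrm{Lip})_s$. This circularity (the $H^{s+1}$ bound needs the $H^s$ bound, the $H^s$ bound on $v_\delta$ needs the $H^s$ closeness to $u_\delta$, and that closeness needs the $H^{s+1}$ bound) is precisely why the paper splits the work: Lemma \ref{lem:boundH1QW} propagates $H^{s+1}$ \emph{assuming} an $H^s$ bound $M$, Lemma \ref{lem:vboundimplyerror} gives the $O(\delta)$ error under the same assumption, and the proof of Proposition \ref{prop:mainthm1} then runs a genuine continuity/bootstrap argument on the set $\mathcal T_\delta=\{T'\in[0,T]:\sup_{t\le T'}\|v_\delta(t)\|_{H^s}<M\}$, using $\sup_t\|u_\delta(t)\|_{H^s}\le \widetilde C_{T,L}L$ from $(\mathrm{Lip})_s$ and choosing $M$ and $\delta_0$ to beat the $O(\delta)$ error, so that $\mathcal T_\delta$ is simultaneously open and closed. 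That closure step, tying $v_\delta$ back to $u_\delta$ in $H^s$, is missing from your write-up and cannot be replaced by iterating per-step bounds alone.
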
 

\begin{remark}\label{rem:pr}
By Lemma \ref{lem:vdeltaisU} and Proposition \ref{prop:mainthm1}, we obviously have
\begin{align}\label{eq:mainthm12}
\|U_\delta(m) u_0-
\udi(m\delta) j_\delta 
u_0\|_{ H^s}\lesssim_{T,L} \delta,\quad \text{for }m\in\N,\ m \delta \leq T,
\end{align}
where the implicit constant are independent of $m,\delta$.
Thus, we obtain the bound for the second term of \eqref{pthm2:2}.
It remains to obtain the bound for the first term of \eqref{pthm2:2}.
\end{remark}

Before proving Proposition \ref{prop:mainthm1}, we prepare several notations and lemmas.
First, we set
\begin{align}\label{Ndash}
G '(v)w=2g'(\<v ,\gamma  v \>_{\C^2})\Re\<w,\gamma  v \>_{\C^2}\gamma  v  +g(\<v ,\gamma  v \>_{\C^2})\gamma  w,
\end{align}
where $G'(v)$ is the Fr\'echet derivative of $G$, $\Re \<w,\gamma v\>_{\C^2}$ is the real part of $\<w,\gamma v\>_{\C^2}$
and
\begin{align}\label{commutator}
 [X,G](v ):=X G(v )-G'(v )X v \text{\ for }X=A,B.
\end{align}

\begin{lemma}\label{lem:com}
We have
\begin{align*}
\|[A,B]v\|_{ H^s}\lesssim \|v\|_{ H^{s+1}},\quad
\|[A,G]v\|_{ H^s}\lesssim_{\|v\|_{H^s}}\|v\|_{ H^{s+1}},\quad 
\|[B,G]v\|_{ H^s}\lesssim_{\|v\|_{ H^s}}1.
\end{align*}
\end{lemma}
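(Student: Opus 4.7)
The plan is to derive each commutator in a closed pointwise form using the Leibniz rule and the chain rule for $G$, reducing each estimate to a product/composition estimate in $H^s$. Throughout I would use three standard facts: $H^s(\R)$ is an algebra for $s\geq 1$; the assumption $\|\mathbf s\|_{L^\infty}+\|\mathbf s'\|_{H^s}<\infty$ together with Sobolev embedding gives $\mathbf s\in W^{s,\infty}(\R,\R^4)$; and the Moser composition bound $\|g(\langle v,\gamma v\rangle_{\C^2})\|_{H^s}\lesssim_{\|v\|_{H^s}}1$ (and the analogue for $g'$), valid since $H^s\hookrightarrow L^\infty$ for $s\geq 1$ makes the scalar argument bounded.

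For $[A,B]$, since $B$ is multiplication by $\mathbf s\cdot\bsig$, the Leibniz rule gives
\begin{align*}
[A,B]v = -i\sigma_3(\mathbf s'\cdot\bsig)v + i\bigl[\mathbf s\cdot\bsig,\,\sigma_3\bigr]\partial_x v.
\end{align*}
The first term is controlled in $H^s$ by $\|\mathbf s'\|_{H^s}\|v\|_{H^s}$ via the algebra property, and the second by $\|\mathbf s\|_{W^{s,\infty}}\|v\|_{H^{s+1}}$; both are $\lesssim \|v\|_{H^{s+1}}$. For $[A,G]$, the crucial observation is that $\partial_x$ acts on $G$ as a derivation: since $\gamma$ is Hermitian, $\partial_x\langle v,\gamma v\rangle_{\C^2}=2\Re\langle\partial_x v,\gamma v\rangle_{\C^2}$, which matches the formula \eqref{Ndash} and yields $\partial_x G(v)=G'(v)\partial_x v$. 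Therefore
\begin{align*}
[A,G](v) = -i\sigma_3 G'(v)[\partial_x v] - G'(v)\bigl[-i\sigma_3\partial_x v\bigr],
\end{align*}
and expanding with \eqref{Ndash} produces a sum of terms, each linear in $\partial_x v$, whose coefficients are pointwise products of $v$ with $g$ or $g'$ evaluated at $\langle v,\gamma v\rangle_{\C^2}$. The $H^s$-algebra combined with the Moser composition bound then gives the required estimate $\lesssim_{\|v\|_{H^s}}\|v\|_{H^{s+1}}$.

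For $[B,G]$, direct computation yields
\begin{align*}
[B,G](v) = g(\langle v,\gamma v\rangle_{\C^2})\bigl[\mathbf s\cdot\bsig,\,\gamma\bigr]v - 2g'(\langle v,\gamma v\rangle_{\C^2})\Re\langle Bv,\gamma v\rangle_{\C^2}\,\gamma v.
\end{align*}
Here no derivatives of $v$ appear; each factor is a product of $\mathbf s$ (in $W^{s,\infty}$) with $v$ (in $H^s$) and smooth scalar functions of $\langle v,\gamma v\rangle_{\C^2}$. Applying the algebra property and Moser composition gives a bound depending only on $\|v\|_{H^s}$ and on fixed quantities built from $\mathbf s$, $\gamma$, and $g$, as required. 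The main (purely bookkeeping) obstacle is to confirm that in the $[A,G]$ estimate the constants depend only on $\|v\|_{H^s}$ and not on $\|v\|_{H^{s+1}}$; this is ensured because every factor other than the single $\partial_x v$ can be placed in $H^s$ via the Moser bound, and no higher derivative of $v$ appears inside the nonlinear composition.
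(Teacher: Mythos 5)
Your proof takes essentially the same route as the paper: compute each commutator in closed form, observe that $[A,B]$ and $[A,G]$ are first order in $v$ while $[B,G]$ is zeroth order, and then invoke $H^s$-algebra and composition estimates. Your observation that $\partial_x G(v)=G'(v)\partial_x v$ gives a slightly cleaner route to the $[A,G]$ formula than the paper's direct expansion, but the resulting terms and the way they are estimated are the same; the $[A,B]$ and $[B,G]$ computations agree with the paper's.

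One small imprecision worth flagging: the bound $\|g(\langle v,\gamma v\rangle_{\C^2})\|_{H^s}\lesssim_{\|v\|_{H^s}}1$ is false as stated whenever $g(0)\neq 0$, because a nonzero constant does not belong to $H^s(\R)$. What is actually needed, and what the paper invokes, is the multiplier estimate $\|hf\|_{H^s}\lesssim\bigl(\|h\|_{L^\infty}+\|\partial_x h\|_{H^{s-1}}\bigr)\|f\|_{H^s}$ combined with the Moser bound in the correct form $\|g(\langle v,\gamma v\rangle_{\C^2})-g(0)\|_{H^s}\lesssim_{\|v\|_{H^s}}1$, or equivalently a bound on $\|g(\langle v,\gamma v\rangle_{\C^2})\|_{L^\infty}+\|\partial_x\, g(\langle v,\gamma v\rangle_{\C^2})\|_{H^{s-1}}$. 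This is a standard one-line fix and does not affect the structure of your argument.
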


\begin{proof}
First, recall \eqref{Assumption,thm:1}.
Thus, by
\begin{align*}
[A,B]v=-\im \sigma_3 (\mathbf s'\cdot \bsig) v+\im \mathbf s \cdot [\bsig,\sigma_3]\partial_x v,
\end{align*}
the bound for $\|[A,B]v\|_{ H^s}$ is obvious since for $s\geq 1$, $H^s$ becomes an algebra.
Next, we have
\begin{align*}
[A,G](v)=&\im g(\< v, \gamma v\>_{\C^2}) [\gamma,\sigma_3] \partial_x v+2 g'(\< v, \gamma v\>_{\C^2})\(\Re\<\im \sigma_3 \partial_x v, \gamma v\>_{\C^2}\gamma v-\<v, \gamma v\>_{\C^2} \im \sigma_3 \gamma v\).
\end{align*}
Again, since $ H^s$ is an algebra, we can bound each term by using the elementary inequality
\begin{align*}
\| hf\|_{H^s}\lesssim \(\|h\|_{L^\infty}+\|\partial_x h\|_{H^{s-1}}\)\|f\|_{H^s}.
\end{align*}
By a similar manner, we have the estimate for $\|[B,G]v\|_{ H^s}$.
\end{proof}

\begin{lemma}\label{lem:boundH1QW}
Let $T'>0$.
Suppose there exists $\delta_1>0$ such that for $\delta\in (0,\delta_1]$, 
$$\sup_{0\leq t\leq T'}\|v_\delta(t)\|_{ H^s}\leq M.$$
Then, there exists $\delta_0>0$ such that for $\delta\in(0,\delta_0]$, 
$$\sup_{0\leq t\leq T'}\|v_\delta(t)\|_{ H^{s+1}}\lesssim_{T', L,M} 1.$$
In particular, the implicit constant is independent of $\delta$.
\end{lemma}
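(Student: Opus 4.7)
The plan is to propagate the $H^{s+1}$-bound through the three sub-flows that compose one step of the splitting, and then iterate the resulting per-step bound over the $\lfloor T'/\delta\rfloor$ cubes needed to reach time $T'$. By the very definition of $v_\delta$, for $t\in[m\delta,(m+1)\delta]$ one has
\[
v_\delta(t,t,t)=e^{-\im (t-m\delta)A}\,e^{-\im (t-m\delta)B}\,\Phi_{G,t-m\delta}\bigl(v_\delta(m\delta,m\delta,m\delta)\bigr),
\]
where $\Phi_{G,\tau}$ is the flow of $\im \partial_\tau w=G(w)$. It therefore suffices to estimate the $H^{s+1}$-growth of each of the three factors over time $\tau\le\delta$.

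The shift $e^{-\im\tau A}$ commutes with every $\partial_x^k$ and is $L^2$-self-adjoint, so it acts isometrically on $H^{s+1}$. For the coin $e^{-\im\tau B}$, I would run the standard commutator energy estimate $\tfrac{d}{d\tau}\|\partial_x^k w\|_{L^2}^2=2\,\Im\langle[\partial_x^k,B]w,\partial_x^k w\rangle$ for $0\le k\le s+1$, expand via Leibniz, and use the $H^s$-algebra property (available because $s\ge 1$) together with $\|\mathbf s\|_{L^\infty}+\|\mathbf s'\|_{H^s}<\infty$ to conclude $\|e^{-\im\tau B}w\|_{H^{s+1}}\le e^{C\tau}\|w\|_{H^{s+1}}$. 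The crucial factor is $\Phi_{G,\tau}$. The same pointwise computation that underlies Lemma~\ref{lem:vdeltaisU} shows that $\langle w(\tau,x),\gamma w(\tau,x)\rangle_{\C^2}$ is conserved pointwise in $\tau$, which linearises the nonlinear flow:
\[
\Phi_{G,\tau}(w_0)(x)=e^{-\im\tau h(x)\gamma}w_0(x),\qquad h(x):=g\bigl(\langle w_0(x),\gamma w_0(x)\rangle_{\C^2}\bigr).
\]
Since $h$ is real and $\gamma$ Hermitian, the diagonal term in the corresponding $H^{s+1}$-energy estimate cancels, leaving only the commutator $[\partial_x^{s+1},h\gamma]w$ which I would bound via Kato--Ponce combined with the Moser/composition inequalities $\|h\|_{L^\infty}\lesssim_{\|w_0\|_{H^s}}1$, $\|\partial_x h\|_{L^\infty}\lesssim_{\|w_0\|_{H^s}}\|w_0\|_{H^{s+1}}$ (using $H^{s+1}\hookrightarrow W^{1,\infty}$ for $s\ge 1$), and $\|h\|_{H^{s+1}}\lesssim_{\|w_0\|_{H^s}}\|w_0\|_{H^{s+1}}$; this yields $\|\Phi_{G,\tau}(w_0)\|_{H^{s+1}}\le\bigl(1+\tau C(\|w_0\|_{H^s})\bigr)\|w_0\|_{H^{s+1}}$.

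To ensure that all the constants above depend only on $M$, I would first run exactly the same three-step analysis at regularity $H^s$, using only the hypothesis $\sup_{[0,T']}\|v_\delta(t)\|_{H^s}\le M$, to conclude that for $\delta$ small enough the intermediate values $v_\delta(t_1,t_2,t_3)$ on any cube $\Omega_\delta^m$ still satisfy $\|v_\delta(t_1,t_2,t_3)\|_{H^s}\le 2M$. Composing the three $H^{s+1}$-estimates then gives $\|v_\delta((m+1)\delta)\|_{H^{s+1}}\le(1+C_M\delta)\,\|v_\delta(m\delta)\|_{H^{s+1}}$. Iterating over $m\le T'/\delta$ and using $\|v_\delta(0)\|_{H^{s+1}}=\|j_\delta u_0\|_{H^{s+1}}\le\|u_0\|_{H^{s+1}}\le L$ produces $\sup_{0\le t\le T'}\|v_\delta(t)\|_{H^{s+1}}\le e^{C_M T'}L\lesssim_{T',L,M}1$, which is the assertion.

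The main obstacle I anticipate is exactly the $H^{s+1}$-control of the nonlinear step $\Phi_{G,\tau}$. A direct energy estimate for $\im\partial_\tau w=G(w)$ produces a constant that depends on $\|w\|_{H^{s+1}}$ itself, so Gronwall cannot close without already assuming the $H^{s+1}$-bound we are after. Pointwise conservation of $\langle w,\gamma w\rangle_{\C^2}$ is what unlocks the argument: it turns the flow into multiplication by a unitary matrix whose smoothness enters only linearly in $\|w_0\|_{H^{s+1}}$, while all the nonlinear dependence is packaged into $\|w_0\|_{H^s}$ and $\|w_0\|_{L^\infty}$, both of which are uniformly controlled by $M$.
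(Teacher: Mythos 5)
Your proposal is correct and follows essentially the same route as the paper: you split the step into the $A$-, $B$-, and $G$-flows, use unitarity for $A$, a Gronwall/commutator bound for $B$, and---crucially---the pointwise conservation of $\langle v,\gamma v\rangle_{\C^2}$ to linearize the $G$-flow so that the $H^{s+1}$-growth rate depends only on $M$, then iterate per step. The only cosmetic difference is that you bound the $G$-step directly as multiplication by $e^{-\im\tau h\gamma}$ via Kato--Ponce, while the paper derives the differential inequality $\bigl|\tfrac{d}{d\tau}\|v_{\delta,1}\|_{H^{s+1}}\bigr|\lesssim_M\|v_{\delta,1}\|_{H^{s+1}}$ and applies Gronwall; both rest on the same conservation observation, and your explicit remark that the intermediate off-diagonal values must first be controlled in $H^s$ is a point the paper leaves implicit.
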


\begin{proof}
For $0\leq \tau\leq \delta$, we set
\begin{align}\label{v1andv2}
v_{\delta,1}(\delta m+\tau):=v_\delta(\delta m+\tau, \delta m, \delta m).
\end{align}
Since we have
\begin{align*}
v_\delta(\delta m+\tau_1,\delta m+\tau_2,\delta m+\tau_3)= 
e^{-i\tau_3 A}
v_{\delta}(\delta m+\tau_1,\delta m+\tau_2,\delta m),
\end{align*}
 by the 3rd line of \eqref{def:v},
we see $$\|v_\delta(\delta m + \tau_1,\delta m +\tau_2,\delta m+\tau_3)\|_{ H^{s+1}}=\|v_{\delta}(\delta m+\tau_1,\delta m+\tau_2,\delta m)\|_{ H^{s+1}}.$$
Similarly, by the 2nd line of \eqref{def:v}, we have
$$v_{\delta}(\delta m+\tau_1,\delta m+\tau_2,\delta m)
=
v_{\delta,1}(\delta m+\tau_1)
-\im\int_0^{\tau_2} B 
v_\delta(\delta m + \tau_1, \delta m + \sigma, \delta m)
\,d\sigma.
$$
Thus,
\begin{align*}
\sup_{0\leq \tau_2\leq \tau}\|v_{\delta}(\delta m+\tau,\delta m+\tau_2,\delta m)\|_{H^{s+1}}\leq &\|v_{\delta,1}(\delta m+\tau)\|_{H^{s+1}}\\&+\widetilde C \tau \sup_{0\leq \tau_2\leq \tau}\|v_{\delta}(\delta m+\tau,\delta m+\tau_2,\delta m)\|_{H^{s+1}},
\end{align*}
where we have used assumption \eqref{Assumption,thm:1}.
Therefore, we conclude
\begin{align}\label{v2bound}
\|v_{\delta}(\delta m + \tau, \delta m+\tau, \delta m)\|_{ H^{s+1}}\leq (1+C \tau)\|v_{\delta,1}(\delta m+\tau)\|_{ H^{s+1}}\leq e^{C \tau} \|v_{\delta,1}(\delta m+\tau)\|_{ H^{s+1}}.
\end{align}

Now, since $v_{\delta,1}(\delta m+\tau)$
is the solution to
$
\im \partial_\tau v_{\delta,1}= G(v_{\delta,1})$ with $v_{\delta,1}(\delta m)=v_\delta(\delta m).
$
Therefore, since we can express $v_{1,\delta}=e^{\im g(\<v_\delta(\delta m),\gamma v_\delta(\delta m)\>)\gamma}v_\delta(\delta)$, we have 
\begin{align}\label{v1bound}
\|v_{1,\delta}(\delta m+\tau)\|_{ H^s}\lesssim_M 1.
\end{align}
Further, since
\begin{align*}
\left|\frac{d}{d\tau}\|v_{\delta,1}\|_{ H^{s+1}}^2\right|\leq\sum_{j=0}^{s+1}\sum_{k=0}^j2{}_jC_k\left |\<\partial_x^k \( g(\<v_{\delta,1},\gamma v_{\delta,1}\>_{\C^2})\)\gamma \partial_x^{j-k} v_{\delta,1},\partial_x^j v_{\delta,1}\>\right|,
\end{align*}
and by \eqref{v1bound}, we have 
\begin{align}\label{v1bound2}
\left|\frac{d}{d\tau}\|v_{\delta,1}\|_{ H^{s+1}}\right|\lesssim_M\|v_{\delta,1}\|_{ H^{s+1}}.
\end{align}
Therefore, by comparison theorem of ordinarily differential equation (or Gronwall's inequality), we have
\begin{align*}
\|v_{\delta ,1}(\delta m+\tau)\|_{ H^{s+1}}\leq e^{C_M \tau}\|v_{\delta}(\delta m)\|_{ H^{s+1}},
\end{align*}
where $C_M>0$ is the implicit constant in \eqref{v1bound2}.
Combining \eqref{v2bound} and \eqref{v1bound2}, we have
\begin{align*}
\|v_{\delta}(\delta m+\tau)\|_{H^{s+1}}\leq e^{C_0\tau}\|v_\delta(\delta m)\|_{H^{s+1}},
\end{align*}
with $C_0=C+C_M$.
Thus for $0\leq t\leq T'$, we have
\begin{align*}
\|v_\delta(t)\|_{ H^{s+1}}\leq e^{C_0 T'}\|u_0\|_{ H^{s+1}}.
\end{align*}
This gives us the conclusion.
\end{proof}

\begin{lemma}\label{lem:vboundimplyerror}
Let $T'>0$ and suppose 
\begin{align}\label{assHsbound}
\sup_{0\leq t\leq T'} \|v_\delta(t)\|_{H^s}\leq M.
\end{align}
Then, we have 
\begin{align}\label{boundtoconv}
\sup_{0\leq t\leq T'}\|v_\delta(t)-u_\delta(t)\|_{H^s}\lesssim_{T',L,M}\delta.
\end{align}
\end{lemma}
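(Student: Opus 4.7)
By Lemma \ref{lem:vdeltaisU}, the sequence $V_k := v_\delta(k\delta)$ satisfies the Lie--Trotter recurrence $V_{k+1} = S_\delta C_\delta N_\delta V_k$ with $V_0 = j_\delta u_0$, while $U_k := u_\delta(k\delta) = \udi(k\delta) j_\delta u_0$ satisfies $U_{k+1} = \udi(\delta)U_k$. A Lady Windermere's fan telescoping expresses the difference as
\[
V_m - U_m = \sum_{k=0}^{m-1}\Bigl\{\udi\bigl((m{-}1{-}k)\delta\bigr)[S_\delta C_\delta N_\delta V_k] - \udi\bigl((m{-}1{-}k)\delta\bigr)[\udi(\delta)V_k]\Bigr\}.
\]
Since $S_\delta, C_\delta, N_\delta$ and $\udi(\delta)$ all preserve $L^2$ and, thanks to \eqref{assHsbound} together with Proposition \ref{lwp:dirac}, keep every argument in a bounded $H^s$-ball, condition $(\mathrm{Lip})_s$ applied to $\udi((m{-}1{-}k)\delta)$ gives
\[
\|V_m - U_m\|_{H^s}\le C_{T',L,M}\sum_{k=0}^{m-1}\|S_\delta C_\delta N_\delta V_k - \udi(\delta) V_k\|_{H^s},
\]
reducing the proof to a single one-step splitting error estimate.

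For that estimate I will Taylor-expand both flows to order $\delta^2$. Using $S_\delta = e^{-\im\delta A}$, $C_\delta = e^{-\im\delta B}$, and (from the proof of Lemma \ref{lem:vdeltaisU}) $\partial_\tau(N_\tau w)|_{\tau=0}=-\im G(w)$, $\partial_\tau^2(N_\tau w)|_{\tau=0}=-G'(w)G(w)$, a direct computation gives
\begin{align*}
S_\delta C_\delta N_\delta w = {} & w - \im\delta\bigl[(A+B)w + G(w)\bigr] - \delta^2\bigl(ABw + AG(w) + BG(w)\bigr) \\ & - \tfrac{\delta^2}{2}\bigl(A^2 w + B^2 w + G'(w)G(w)\bigr) + O(\delta^3),
\end{align*}
while the NLD Taylor expansion reads
\begin{align*}
\udi(\delta)w = {} & w - \im\delta\bigl[(A+B)w + G(w)\bigr] \\ & - \tfrac{\delta^2}{2}\bigl(A+B+G'(w)\bigr)\bigl[(A+B)w + G(w)\bigr] + O(\delta^3).
\end{align*}
Subtracting, the $O(\delta)$ terms cancel and, via $[A,B]=AB-BA$ and the definition $[X,G](w)=XG(w)-G'(w)Xw$ from \eqref{commutator}, the $\delta^2$ contributions collapse into
\[
S_\delta C_\delta N_\delta w - \udi(\delta)w = -\tfrac{\delta^2}{2}\bigl([A,B]w+[A,G](w)+[B,G](w)\bigr) + O(\delta^3).
\]
Lemma \ref{lem:com} then bounds each commutator in $H^s$ by $\|w\|_{H^{s+1}}$, up to a polynomial factor in $\|w\|_{H^s}$, yielding the one-step estimate $\|S_\delta C_\delta N_\delta w - \udi(\delta)w\|_{H^s}\lesssim_{\|w\|_{H^s}}\delta^2\|w\|_{H^{s+1}}$. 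The cubic remainder is controlled analogously, using that $H^s$ is an algebra and $G$ is $C^\infty$.

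To feed this into the telescoped inequality with $w = V_k$, I will invoke Lemma \ref{lem:boundH1QW}: the hypothesis \eqref{assHsbound} automatically promotes to a uniform bound $\sup_{t\in[0,T']}\|v_\delta(t)\|_{H^{s+1}}\lesssim_{T',L,M}1$. Summing over $m\le T'/\delta$ steps then gives $\|V_m - U_m\|_{H^s}\lesssim_{T',L,M}T'\delta$. To pass from the lattice times $m\delta$ to arbitrary $t\in[0,T']$, I write $t=m\delta+\tau$ with $\tau\in[0,\delta]$ and estimate $\|v_\delta(t)-V_m\|_{H^s}$ and $\|u_\delta(t)-U_m\|_{H^s}$ separately: for $u_\delta$, by integrating the NLD from $m\delta$ to $t$; for $v_\delta$, by traversing the three legs of $\Omega_\delta^m$ through \eqref{def:v}. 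Each contribution is $O(\tau)\le O(\delta)$ thanks to the uniform $H^{s+1}$-bound, completing \eqref{boundtoconv}.

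The main obstacle I expect is keeping the second-order Taylor expansion clean---especially the nonlinear $N_\delta$-flow and the mixed term $G'(w)Aw$ appearing in the NLD expansion---and tracking regularity tightly: the $O(\delta^2)$ local error is sharp only when paid with one extra derivative on $w$, as reflected by the $\|w\|_{H^{s+1}}$-factor coming from Lemma \ref{lem:com}. Without Lemma \ref{lem:boundH1QW} promoting the $H^s$-boundedness of $v_\delta$ to an $H^{s+1}$-boundedness, the telescoped sum would fail to close.
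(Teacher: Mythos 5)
Your proposal replaces the paper's energy method (a Gronwall argument for the defect $F_{12}$ in the $(t_1,t_2,t_3)$ parametrization, following Holden--Karlsen--Risebro--Tao) by a Lady Windermere's fan / Lie--Trotter telescoping plus a one-step local error estimate obtained by second-order Taylor expansion. This is a genuinely different route, and your algebra for the second-order commutator term is correct, but as written it has two real gaps.

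First, the fan decomposition requires applying $\udi\bigl((m{-}1{-}k)\delta\bigr)$ to the points $S_\delta C_\delta N_\delta V_k$ and $\udi(\delta)V_k$, which live in the $H^s$-ball of radius roughly $M$, while the hypothesis only provides the Lipschitz estimate of $(\mathrm{Lip})_s$ for the pair $(T,L)$, i.e.\ for data in the ball of radius $L$. In the eventual application (Proposition \ref{prop:mainthm1}) one takes $M = 4\widetilde C_{T,L}L > L$, so the required Lipschitz constant on the $M$-ball for time $T'$ is simply not available from the stated assumptions, unless one separately assumes global wellposedness. The paper's energy method sidesteps this completely: it compares $v_\delta$ and $u_\delta = \udi(\cdot)\,j_\delta u_0$ directly, and only ever needs the Lipschitz property of $\udi$ on the $L$-ball (to control $\|u_\delta\|_{H^s}$, which enters through $G'(u_\delta+\tau w_\delta)$), never propagating $M$-ball data by the exact flow.

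Second, the one-step error estimate $\|S_\delta C_\delta N_\delta w - \udi(\delta)w\|_{H^s}\lesssim \delta^2\|w\|_{H^{s+1}}$ cannot be obtained from a second-order Taylor expansion with a naively controlled cubic remainder: that remainder involves third-order spatial derivatives (e.g.\ terms like $A^3w$ coming from $\partial_\tau^3$ of either flow) and hence needs $\|w\|_{H^{s+3}}$, while Lemma \ref{lem:boundH1QW} only promotes the assumed $H^s$-bound to an $H^{s+1}$-bound. Your appeal to ``$H^s$ is an algebra and $G$ is $C^\infty$'' does not control the loss of derivatives in $A^3$. The paper avoids this by never performing a second-order expansion: the defect $F_{12}$ is only $O(\delta)$, vanishes identically at the lattice times $(t_1,\delta m,\delta m)$, and its growth in $t_3$ and $t_2$ is controlled through the transport equations $\im\partial_{t_3}F_{12}=AF_{12}+[A,G](v_\delta)+[A,B]v_\delta$ and $\im\partial_{t_2}F_1 = BF_1+[B,G]v_\delta$, costing only one derivative through Lemma \ref{lem:com}. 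To make your fan approach rigorous at the stated regularity, the local error itself must be proved by this kind of Duhamel/Gronwall argument on a single $\delta$-step (and one must deal with the issue that $S_\delta C_\delta N_\delta$ is a composition, not a one-parameter flow), at which point one is essentially reproducing the paper's computation.
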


\begin{proof}
Set $w_\delta(t):=v_\delta(t)-u_\delta(t)$.
%
By \eqref{assHsbound} and Lemma \ref{lem:boundH1QW}, we have
\begin{align*}
\sup_{0\leq t\leq T'}\|v_\delta(t)\|_{ H^{s+1}}\lesssim_{T',L,M} 1.
\end{align*}
Next, by \eqref{def:v}, we have
\begin{align*}
\im \partial_t w_\delta&=\im\partial_{t_1} v_\delta+\im\partial_{t_2} v_\delta +\im\partial_{t_3}v_\delta-\im\partial_t u_\delta \\&=
\im\partial_{t_1}v_\delta - G(v_\delta) + \im \partial_{t_2}v_\delta - Bv_\delta +(A w_\delta +B w_\delta + G(v_\delta)-G(u_\delta)).
\end{align*}
Then, setting $F_{12}(t_1,t_2,t_3):=\im \partial_{t_1}v_\delta - G(v_\delta) + \im \partial_{t_2}v_\delta - Bv_\delta$, we have
\begin{align*}
\frac{d}{dt}\|w_\delta(t)\|_{ H^s}^2=2\Re\(\<\im w_\delta,F_{12}\>_{ H^s}+\<\im w_\delta, G(v_\delta)-G(u)\>_{ H^s}+\<\im w_\delta, B w_\delta\>_{ H^s}\).
\end{align*}
and thus
\begin{align}\label{diffofw}
\frac{d}{dt}\|w_\delta(t)\|_{ H^s}\leq\|F_{12}\|_{ H^s}+\|G(v_\delta)-G(u)\|_{ H^s}+\|B w_\delta\|_{ H^s}.
\end{align}
Recall we have $\|B w_\delta\|_{ H^s}\lesssim \|w_\delta\|_{ H^s}$.

By \eqref{def:v} we have $F_{12}(t,\delta m, \delta m)=0$.
Further, 
\begin{align*}
\im\partial_{t_3}F_{12} &= \im \partial_{t_1}(A v_\delta) - G'(v_\delta)(A v_\delta) + \im \partial_{t_2}(A v_\delta) - BAv_\delta \\&
= A F_{12} +[A, G](v_\delta) + [A,B]v_\delta.
\end{align*}
By lemma \ref{lem:com}, we have
\begin{align*}
\|[A,G]v_\delta\|_{ H^s}+\|[A,B](v_\delta)\|_{ H^s}\lesssim_{T',L,M} \|v_\delta\|_{ H^{s+1}}\lesssim_{T',L,M}1.
\end{align*}
Therefore, we have
\begin{align}\label{est:F12}
\|F_{12}(t,t,t)\|_{H^s}\leq \|F_{12}(t,t,\delta m)\|_{H^s}+ C \delta,
\end{align}
where $C=C_{T',L,M}>0$ is a constant.
Now, we set $$F_1(t_1,t_2):=F_{12}(t_1,t_2,\delta m)=\im \partial_{t_1}v_\delta - G(v_\delta).$$
By \eqref{def:v}, we have $F_1(t,\delta m)=0$ and 
\begin{align*}
\im \partial_{t_2}F_1=\partial_{t_1} ( B v_{\delta})+\im G'(v_\delta)( B v_\delta)= B F_1 + [B,G]v_\delta.
\end{align*}
The estimate of $\|F_1(t,t)\|_{H^s}$ need a little care since $B$ do not commutate with the derivatives.
Since, by Lemma \ref{lem:com}, we have
$
\|[B,G]v_\delta\|_{H^s}\lesssim_{T',L,M} 1,
$
we first get the estimate
\begin{align}\label{est:F1}
\|F_1(t,t)\|_{L^2}\lesssim_{T',L,M}  \delta.
\end{align}
Suppose that for $s'\leq s$, we have the estimate
\begin{align}\label{est:F13}
\|F_1(t,t)\|_{H^{s'-1}}\lesssim_{T',L,M}  \delta.
\end{align}
Since
\begin{align*}
\Re\<\partial_x^j F_1, -\im \partial_x^j\(B F_1\)\>=\sum_{k=0}^{j-1}{}_jC_k\Re\<\partial_x^j F_1, -\im \partial_x^{j-k}\mathbf s\cdot\bsig \partial_x^k F_1\),
\end{align*}
we have
\begin{align*}
\partial_t \|F_1(t,t)\|_{H^{s'}}\lesssim_{T',L,M} \delta + 1.
\end{align*}
Therefore, we obtain \eqref{est:F13} with $s'-1$ replaced by $s'$ and thus by induction we have \eqref{est:F13} with $s'-1$ replaced by $s$.
Therefore, substituting \eqref{est:F1} into \eqref{est:F12}, we have
\begin{align}\label{est:F122}
\|F_{12}\|_{ H^s}\lesssim_{T',L,M} \delta.
\end{align}

Next, since
\begin{align*}
G(v_\delta)-G(u_\delta)=\int_0^1 G'(u_\delta + \tau w_\delta) w_\delta \,d\tau. 
\end{align*}
we have
\begin{align}\label{est:G}
\|G(v_\delta)-G(u_\delta)\|_{ H^s}\lesssim_{T',L,M} \|w_\delta\|_{ H^s}.
\end{align}
Therefore, by \eqref{diffofw}, \eqref{est:F122} and \eqref{est:G}, we have
\begin{align}\label{est:wdelta}
\frac{d}{dt}\|w_\delta\|_{ H^s}\lesssim_{T',L,M} \delta+\|w_\delta\|_{H^{s}}.
\end{align}
This gives us the conclusion.
Indeed, if we have such inequality, setting $A(t)$ by
\begin{align*}
A(0)=\|w_\delta(0)\|_{H^s}=0,\quad A'(t)=\widetilde C(A+\delta),
\end{align*}
we have $\|w_\delta(t)\|_{H^s}\leq A(t)$, where $\widetilde C=\widetilde C_{T',L,M}>0$ is the implicit constant in \eqref{est:wdelta}.
Moreover, since we have $A(t)=e^{\widetilde Ct}(A(0)+\delta)-\delta$, we can conclude
\begin{align*}
\sup_{0\leq t\leq T'}\|w_\delta(t)\|_{H^s}\leq e^{\widetilde CT'}\delta,
\end{align*}
which is the desired estimate.
\end{proof}

\begin{proof}[Proof of Proposition \ref{prop:mainthm1}]
By Lemma \ref{lem:vboundimplyerror}, it suffices to prove \eqref{assHsbound} for $T'=T$.
Let $\widetilde C_{T,L}$ be the constant given by the assumption that $(T,L)$ satisfies $(\mathrm{Lip})_s$.
Without loss of generality, we can assume $\widetilde C_{T,L}L> \max(1,L)$.
Set $M=M_{T,L}:=4\widetilde C_{T,L}L$.
Let $C_{T,L,M}$ the implicit constant given in \eqref{boundtoconv} in Lemma \ref{lem:vboundimplyerror}.
Set $\delta_{T,L}:=C_{T,L,M_{T,L}}^{-1}$.
For $\delta\in (0, \delta_{T,L})$,  we set $$\mathcal T_\delta:=\{T'\in[0,T]\ |\ \sup_{0\leq t\leq T'}\|v_\delta(t)\|_{H^s}<M\}.$$
Then, it suffices to show $T\in \mathcal T_\delta$.

First, $0\in \mathcal T_\delta$ so $\mathcal T_\delta$ is not empty.
Further, since $v_\delta$ is continuous in $H^s$, we see that $\mathcal T_\delta$ is an open interval in $[0,T]$ (i.e.\ there exists an open interval $\mathcal O\subset \R$ s.t.\ $\mathcal T_\delta=[0,T]\cap \mathcal O$).
Now, suppose $T^*:=\sup \mathcal T_\delta<T$.
Then, for any $T'<T^*$, by Lemma \ref{lem:vboundimplyerror}, we have
\begin{align*}
\sup_{0\leq t\leq T'}\|v_\delta(t)\|_{H^s}&\leq \sup_{0\leq t\leq T'}\|u_\delta(t)\|_{H^s}+\sup_{0\leq t\leq T'}\|v_\delta(t)-u_\delta(t)\|_{H^s}\\&\leq \widetilde C_{T,L}L + C_{T,L,M}\delta\leq C_{T,L}L+1<\frac12 M.
\end{align*}
Therefore, by continuity, we have
\begin{align*}
\sup_{0\leq t\leq T^*}\|v_\delta(t)\|_{H^s}\leq \frac12 M,
\end{align*}
and thus for sufficiently small $\epsilon>0$, we have $T^*+\epsilon\in \mathcal T_\delta$, which contradicts the definition of $T^*$.
Therefore, we have $T\in \mathcal T_\delta$.
\end{proof}

\begin{proof}[Proof of Theorem \ref{thm:1}]
As mentioned in Remark \ref{rem:pr},
we need only bound the first term of \eqref{pthm2:2}.
We note that from Proposition \ref{prop:mainthm1}, Lemma \ref{lem:boundH1QW} and the assumption \eqref{Assumption,thm:1,2} we have
\begin{align}\label{boundednessUdelta}
\sup_{0\leq m\leq \lfloor T/\delta \rfloor}\|U_\delta(m)j_\delta u_0\|_{ H^{s+1}}\lesssim_{T,L} 1.
\end{align}
Further, notice that from the definition of $U_\delta$ and $\mathcal U_\delta$ and \eqref{i1}, we have 
\begin{align}\label{latticeeq}
\hid
\circ \ud(m) \circ \hid^{-1}\circ j_\delta u_0(x) = U_\delta(m) j_\delta u_0(x)\text{ for each }x\in \delta\Z.
\end{align}
Thus we can apply Proposition \ref{prop:shanon} and obtain
\begin{align}
\|\id \circ \ud(m) \circ \id^{-1}\circ j_\delta u_0 - U_\delta(m) j_\delta u_0\|_{ H^s}&\lesssim \delta\| U_\delta(m) j_\delta u_0\|_{H^{s+1}}  \lesssim_{T,L} \delta.
\end{align}
This completes the proof of Theorem \ref{thm:1}.
\end{proof}

\section*{Acknowledgments}  
M.M.\ was supported by the JSPS KAKENHI Grant Numbers 19K03579, JP17H02851 and\\ JP17H02853.
A. S. was supported by JSPS KAKENHI Grant Number JP26800054 and JP18K03327.

\medskip

Masaya Maeda

Department of Mathematics and Informatics,
Faculty of Science,
Chiba University,
Chiba 263-8522, Japan

{\it E-mail Address}: {\tt maeda@math.s.chiba-u.ac.jp}

\medskip

Akito Suzuki

Division of Mathematics and Physics,
Faculty of Engineering,
Shinshu University,
Nagano 380-8553, Japan

{\it E-mail Address}: {\tt akito@shinshu-u.ac.jp}

\end{document}